\documentclass[12pt]{amsart}
\usepackage{amsmath,amsfonts,amssymb,amsthm,epsfig,color, graphicx, natbib, appendix,comment,subcaption, graphicx, amssymb,xcolor,geometry,bm,enumerate}

\usepackage{marginnote}

\usepackage{bbm}
\usepackage{setspace} 
\usepackage[colorlinks=true,
    linkcolor=blue,
    filecolor=magenta,      
    urlcolor=blue]{hyperref}
\usepackage{tikz}
\usetikzlibrary{shapes,arrows}
\usetikzlibrary{intersections}
\usetikzlibrary{external}
\usepackage[capitalise,noabbrev,nameinlink]{cleveref}

\newtheorem{theorem}{Theorem}
\newtheorem{lemma}{Lemma}
\newtheorem*{definition}{Definition}
\newtheorem{assumption}{Assumption}

\newtheorem{proposition}{Proposition}

\theoremstyle{definition}
\newtheorem{example}{Example}
\theoremstyle{remark}

\newtheorem{fact}{Fact}
\newtheorem*{example*}{Example}
\theoremstyle{remark}

\usepackage{tikz}
\usetikzlibrary{positioning,quotes}

\definecolor{darkblue}{rgb}{0.0, 0.0, 0.55}
\hypersetup{colorlinks,linkcolor={darkblue},citecolor={darkblue},urlcolor={darkblue}} 
\allowdisplaybreaks
\title{Network Interventions: Targeting Agents or Targeting Links?}

\author{Krishna Dasaratha \and Anant Shah}

\thanks{Dasaratha: krishnadasaratha@gmail.com, Boston University. Shah: anantshah200@gmail.com, University of Cambridge. We are grateful to Ben Golub and Jason Hartline for very valuable discussions. We would also like to thank (in random order) Federico Bobbio, Matthew Jackson, Bruno Strulovici, and Aravindan Vijayaraghavan for helpful comments. We used LLM tools including Refine.ink to check writing and proofs, and Claude assisted with calculations in Section 5. Any errors are our own.}

\begin{document}

\begin{abstract}
    Consider a network game with linear best responses and spillovers between players, and let agents endogenously choose their links. A planner considers interventions to subsidize actions and/or links between players, aiming to maximize a welfare objective. The structure of an optimal intervention is shaped by the intrinsic value that links provide to agents. When this value is non-negative, it is optimal to focus only on subsidizing actions. When it is negative, we give conditions under which it is necessary to include link subsidies. This reverses the basic structure of the optimal intervention in settings with exogenous links. 
    
\end{abstract}

\begin{titlepage}
\date{\today}
\maketitle
\thispagestyle{empty} 
\end{titlepage}

\section{Introduction}
\label{s:introduction}

Many economic settings feature peer effects between connected agents. As one example, consider firms deciding how much to invest in research and development in the presence of a network of R\&D collaborations.\footnote{Research and development collaborations are an important driver of innovation (see, for example, \citealp*{belderbos2004cooperative}). Motivated by this, a theoretical literature models firms' choices about R\&D investment as a network game (see, for example, \citealp{GJ-03}; \citealp{GM-01}; and \citealp{konig2019r}).} Another leading example is students deciding how much to study given their social networks.\footnote{A large literature analyzes peer effects from friends in a social network  (e.g., \citealp{BDF-09} and \citealp{calvo2009peer}). Much of this work treats links as exogenous, but links may be endogenous in practice \citep{goldsmith2013social}.} Following a large literature, we model such settings as network games with complementarities: each agent chooses a level of effort, and an agent's effort is more productive when neighbors exert more effort (\citealp{BCZ-06}).

When network spillovers are important, a natural question is how to target interventions given these spillovers (\citealp*{galeotti2020targeting}). Policymakers often implement interventions incentivizing individual agents. But other types of interventions target links between agents. Consider, for instance, a government trying to subsidize R\&D. One approach is to target individual effort with subsidies or related policies, e.g., subsidizing R\&D at particular firms. Another is to create or subsidize links between agents, e.g., subsidizing collaborative efforts between firms or R\&D clusters. We examine how these two intervention strategies interact and identify conditions under which one is more effective than the other.   

We analyze these questions in a network game of strategic complementarities with endogenous link formation. Agents simultaneously choose a costly action and costly bilateral link intensity for every other agent --- corresponding to an investment to establish connections. These link intensities determine a network of spillovers, where the symmetric strength of a link between a pair of agents is the sum of their link intensities. An important parameter, which we call the \emph{baseline link incentive} (denoted by $s_{ij}$), describes whether a link has intrinsic value to an agent. If this incentive is positive ($s_{ij} > 0$), agent $i$ forms a link even when agent $j$ takes no action while if it is negative ($s_{ij} < 0$), a link is formed only if agent $j$ is exerting enough effort.



A planner intervenes by providing monetary subsidies to agents for taking an action and forming links. The common monetary units let us compare action subsidies to link subsidies. The planner seeks to maximize a welfare objective when agents play an equilibrium in the resulting game, subject to a budget constraint on the subsidy payments. The objective can be any arbitrary increasing function of agents' actions. An example is the sum of all agents' actions, which could correspond to a  policymaker targeting aggregate investment in R\&D or the total time students devote to studying. What will the planner focus on: action subsidies or link subsidies?

Both types of subsidies induce complex spillovers which affect both the equilibrium action profile and the payments the planner must make. To provide an illustration, suppose the planner increases the action subsidy of a given individual. This has a direct effect: the agent increases his action. But this also induces various indirect effects: the agent strengthens his links, this causes his neighbors to increase their actions and links, etc. Similarly, increasing the link subsidy for a given pair of individuals has a number of direct and indirect effects. The overall spillovers are quite involved, but we can nevertheless compare the two types of interventions by exploiting a simple relationship between the spillovers they induce.


Our main results describe whether the planner subsidizes links or actions. It turns out the key determinant is the baseline incentive to form links.
If a pair of agents has a non-negative baseline incentive to form links ($s_{ij} \geq 0$), the optimal policy does not subsidize their link. By contrast, if two connected agents have a negative baseline incentive ($s_{ij} < 0$), then the optimal policy will not subsidize both of their actions without subsidizing their link. Notably, these qualitative properties of the optimal intervention are not very sensitive to how the costs to agents of exerting effort and forming links compare.

The basic insight is that the planner should not subsidize connections that have inherent value to the agents, but may want to subsidize connections between welfare-relevant agents who are not inclined to connect. In the education setting, for example, the result cautions against policies encouraging connections between socially compatible students. Connecting students who would not interact for non-academic reasons, however, can be optimal. This point might be straightforward in a model with binary link choices, but is less obvious when agents choose the intensity of connections.

To give some intuition for the results, if the network is sufficiently connected, then (as we describe below) the benefits to strengthening links can be very large. Moreover, the link formation protocol creates free-riding incentives between pairs of agents. But it is nevertheless cheapest for the planner to reward high actions and let agents form stronger links in response. When baseline incentives are negative, however, agents' underinvestment in link formation can become more severe, so the planner can optimally target this margin by directly subsidizing links. 

We next compare the endogenous link formation model to a benchmark model with exogenous links and find the basic structure of the optimal intervention reverses. In this benchmark, agents make link-formation decisions without considering the subsequent network game (which is the relevant object for the planner). This is essentially equivalent to a setting where links are exogenous, and we choose this formulation so that the planner's interventions to strengthen links can be implemented via subsidies. We show that when the baseline incentives to form links are large enough, it is optimal to only subsidize links. A key reason is that when the network is connected enough, the returns to further increasing connections grow very large. The result is the opposite of our main finding in the endogenous case, where the planner optimally subsidizes only actions when these baseline incentives are non-negative.


Finally, we ask to what extent our main result generalizes beyond quadratic settings. We identify conditions on agents' utilities driving each part of our result. A main insight is that the results are not sensitive to changing agents' costs of forming links or taking actions. So the main determinant of whether to subsidize links or actions is not which is more expensive for agents.


\subsection{Related Literature}

There are large literatures on network games in economics, computer science, and related fields (see \citealp{menache2011network} and \citealp{jackson2015games} for overviews). We begin with a framework with linear best responses, which is widely used for its tractability (e.g., \citealp*{BCZ-06} and \citealp*{bramoulle2014strategic}), and focus on the case when actions are complementary. This model has been applied to a wide range of empirical settings, including our main examples above of education (\citealp*{calvo2009peer}) and research and development (\citealp*{konig2019r}). Our endogenous network model is closest to parts of this literature incorporating network formation (e.g., \citealp{sadler2021games}).

More recent work has explored optimal interventions in network games. Most of this literature focuses on targeting individual behavior via subsidies for effort, prices, or related instruments (e.g., \citealp*{candogan2012optimal}, \citealp*{bloch2016targeting}, \citealp*{galeotti2020targeting}, \citealp*{parise2023graphon}). A smaller group of papers focuses on interventions targeting links (\citealp*{belhaj2016efficient} and \citealp{li2023designing}) and gives conditions when the optimal networks belong to a particular class: nested split graphs. We allow both types of interventions and focus on the tradeoffs between the two.

Most closely related, \cite*{sun2023structural} and \cite*{kor2025welfare} also consider interventions when the planner can target individuals and links. The main result in \cite{sun2023structural} is an equivalence between the two types of interventions: any intervention targeting only links induces the same outcomes as some intervention targeting only individuals. \cite{kor2025welfare} extend the tools from \cite{galeotti2020targeting} to settings where the planner can apply their budget to subsidizing actions or changing links. This gives a characterization of the optimal intervention in terms of a spectral decomposition of the adjacency matrix of the original network.\footnote{Another difference from the present work is a quadratic cost to interventions, including the monetary subsidies for actions. We consider a more traditional budget constraint and find this leads to starker structure for the optimal intervention.} We focus on asking when one instrument dominates the other. Because the planner directly changes the network in prior work, such comparisons would be determined by a parameter capturing the relative costs of the two instruments. We instead allow endogenous network formation, which lets us compare the value of paying for links and actions in common monetary units, and find it is often optimal to use only one of the instruments.

A bit further afield, there are long lines of work on optimal interventions in networks beyond network games. Other applications include diffusion of information (e.g., \citealp*{kempe03maximizeinfluence}, \citealp*{banerjee2013diffusion}, and \citealp{campbell2013word}), opinion dynamics (e.g., \citealp*{yildiz2011discrete} and \citealp{sadler2023influence}), and multi-agent contracts (\citealp*{dasaratha2024incentive} and \citealp*{CMO}).


\section{Model}
\label{s:model}

We analyze a network game with endogenous link formation. Consider a set of $n$ agents, who we denote by $\mathcal{N} = \{1,2,\dots,n\}$, and a social planner. (We will use the pronouns ``he" to refer to an agent and ``she" to refer to a planner.)

Each agent $i$ chooses a real-valued action $a_i \geq 0$ and a real-valued link intensity $g_{ij} \geq 0$ with each other agent $j \in \mathcal{N} \setminus \{i\}$. The link intensities between $i$ and $j$ determine the strength $G_{ij} = G_{ji} = g_{ij} + g_{ji}$ of a weighted, undirected link between $i$ and $j$. The collection of links can be represented as a non-negative matrix $\bm{G} \in \mathbb{R}_{\geq 0}^{n \times n}$, with zero entries $G_{ii}=0$ on the diagonal.

The utility to agent $i$ from action profile $\bm{a}$ and link intensities $\bm{g}$ is \begin{multline*}
U_{i}\left(\bm{a},\bm{g}\right) = b_{i}a_{i} + \rho \sum_{j \in \mathcal{N} \setminus \{i\}} G_{ij}a_{i}a_{j}
- \frac{a_{i}^2}{2} + \sum_{j \in \mathcal{N} \setminus \{i\}}s_{ij}G_{ij} - \sum_{j \in \mathcal{N} \setminus \{i\}} \frac{f_{ij}g_{ij}^2}{2}.
\end{multline*}
We describe each term in the utility function:
\begin{itemize}
    \item The first term is a linear individual payoff $b_i \in \mathbb{R}_{>0}$ to actions  which we call the standalone baseline incentive for the action. 
    \item The second term captures spillovers between agents' actions. The parameter $\rho \in \mathbb{R}_{> 0}$ describes the global strength of complementarities. Each summand $G_{ij}a_ia_j$ implies that there are complementarities between agents $i$ and $j$, and these complementarities are stronger when the link $G_{ij}$ is larger.
    \item The third term is a quadratic cost $(a_{i}) = a_{i}^2/2$ of actions.\footnote{We could allow heterogeneous costs $C_{i}(a_{i}) = c_{i}a_{i}^2/2$, but it is without loss of generality to normalize the coefficients $c_i$ to $1$.}
    \item The fourth term is an intrinsic payoff $s_{ij} \in \mathbb{R}$ to links, which we call the standalone baseline incentive for a link. 
    \item The final term is a quadratic cost $F_{ij}(g_{ij}) = f_{ij}g_{ij}^2/2$ to link intensities, where the coefficient $f_{ij} \in \mathbb{R}_{> 0}$ can differ across $i$ and $j$.
\end{itemize} 

We assume that the intrinsic payoff from links $s_{ij}$ and the curvature of link intensity costs $f_{ij}$ are symmetric across pairs, that is, $s_{ij}=s_{ji}$ and $f_{ij}=f_{ji}$. A natural explanation for this symmetry is that the inherent values of links depend only on pairwise characteristics, such as similarity between agents or the distance separating them. The model allows the various standalone incentives and costs to be heterogeneous across agents and across links.  


The planner maximizes an objective  function $\mathcal{W}(\bm{a})$ depending on agents' actions, and this objective is strictly increasing and differentiable in each of its arguments. She can intervene by providing linear subsidies rewarding agents' actions and links. Specifically, agent $i$ receives a subsidy $\beta_{i} \geq 0$ for exerting effort. Furthermore, agent $i$ receives a subsidy $\sigma_{ij} \geq 0$ for forming a link $G_{ij}$ with agent $j$. For a fixed intervention $\left(\bm{\beta},\bm{\sigma}\right)$ and realized profile $(\bm{a},\bm{g})$, agent $i$'s utility is \begin{align*}
    \mathcal{U}_{i} = U_{i}\left(\bm{a},\bm{g}\right) + \beta_{i}a_{i} + \sum_{j \in \mathcal{N} \setminus \{i\}} \sigma_{ij}G_{ij}. 
\end{align*} The total payment made by the planner is \begin{align*}
    \mathcal{P} = \sum_{i \in \mathcal{N}} \beta_{i}a_{i} + \sum_{i \in \mathcal{N}}\sum_{j \in \mathcal{N} \setminus \{i\}} \sigma_{ij}G_{ij}.
\end{align*} The planner has a budget $\mathcal{B} > 0$, which can be allocated between action and link subsidies.

Once the planner commits to an intervention, agents respond by playing a pure strategy Nash equilibrium $\left(\bm{a}^*,\bm{g}^*\right)$. If no equilibrium exists under a particular subsidy scheme, we set the welfare $\mathcal{W}$ under that subsidy scheme to $-\infty$. We analyze the planner's problem under a given equilibrium selection $(\bm{a}^*(\bm{\beta},\bm{\sigma}), \bm{g}^*(\bm{\beta},\bm{\sigma}))$. Her problem is therefore to maximize the objective $\mathcal{W}(\cdot)$ under this selection, subject to a budget constraint: \begin{align*}
\sup_{\left(\bm{\beta},\bm{\sigma}\right)} & \quad \mathcal{W}(\bm{a^*}), \\
    \text{subject to} & \quad \left(\bm{a}^*,\bm{g}^*\right) =(\bm{a}^*(\bm{\beta},\bm{\sigma}), \bm{g}^*(\bm{\beta},\bm{\sigma})), \\
    & \quad \mathcal{P}(\bm{a}^*,\bm{g}^*,\bm{\beta},\bm{\sigma}) \leq \mathcal{B}.
\end{align*} 

For this problem to be well-defined, there must be a subsidy and a corresponding equilibrium that satisfy the budget constraint. A complication is that an equilibrium need not exist if the global complementarity parameter or the planner's subsidies are sufficiently large. A standard fixed-point argument shows that for the zero subsidy $\bm{\beta}=0$ and $\bm{\sigma}=0$, an equilibrium exists as long as the complementarity parameter $\rho$ is small enough (see \Cref{fact:eqlbf} in \Cref{a:omittedproofs}). The restriction on the global complementarity parameter plays a similar role to the spectral radius restriction in standard linear-quadratic network games (e.g., \citealp{BCZ-06}).

\subsection{Remarks on the model} \label{ss:modelremarks}



We interpret the subsidies $(\bm{\beta},\bm{\sigma})$ as monetary payments to agents, and this lets us compare action and link subsidies in common monetary units. We assume that subsidies target bilateral links $G_{ij}$ and not agents' link intensities $g_{ij}$, e.g., because the planner observes the strength of the link but not who is investing to sustain the link. While we require payments to depend on the weight $G_{ij}$ of the undirected link, we do not restrict the link subsidy to be symmetric across a pair of agents. The planner can target agents differently due to heterogeneity.

We note that it is not important for our analysis whether the intrinsic value $s_{ij}$ of links is a coefficient on the link $G_{ij}$, the link intensity $g_{ij}$, or some combination of the two (as the difference is a pure externality term that does not change equilibrium).

Once the planner selects a subsidy $(\bm{\beta},\bm{\sigma})$, agents choose actions $\bm{a}^*$ and link intensities $\bm{g}^*$ simultaneously. Studying a model with real-valued links and simultaneous link and action choices lets us study Nash equilibrium. We can therefore analyze optimal interventions via a first-order approach. Alternative solution concepts commonly used in the strategic link-formation literature, such as pairwise stability \citep{jackson1996strategic}, would necessitate different methods.

\section{Spillover Effects Of Policy Interventions}
\label{s:spilloverdynamics}

This section describes the welfare effects of increasing subsidies to actions or links. We will see there is a clean relation between the benefits to the planner from each of these types of subsidies.


Before analyzing the effects of perturbing subsidies, we begin with a characterization of equilibrium under a fixed subsidy scheme:
\begin{proposition}
\label{p:eqlbendo}
    Consider an intervention $(\bm{\beta},\bm{\sigma})$ and a pure strategy Nash equilibrium $(\bm{a}^*,\bm{g}^*)$ under this intervention. Then actions satisfy \begin{align}\label{eq:action_FOC}
       \left(\bm{I} - \rho \bm{G}^*\right)\bm{a}^*  = \bm{b} + \bm{\beta}.
    \end{align}

    Furthermore, any link intensity $g_{ij}^*$ for which the equilibrium first-order condition binds satisfies \begin{align}\label{eq:link_FOC}
        g_{ij}^* = \frac{\rho a_{i}^*a_{j}^* + s_{ij} + \sigma_{ij}}{f_{ij}}. 
    \end{align}
\end{proposition}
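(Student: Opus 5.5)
The plan is to derive both equations directly from each agent's individual best-response problem. In a pure strategy Nash equilibrium $(\bm{a}^*,\bm{g}^*)$, each agent $i$ maximizes $\mathcal{U}_i$ over his own choice variables $(a_i,(g_{ij})_{j\neq i})$, taking the others' choices as fixed. The first step is to note that, holding others fixed, $\mathcal{U}_i$ is additively separable into a part depending on $a_i$ (given $\bm{g}$) and parts depending on each $g_{ij}$ (given $\bm{a}$), apart from the cross term $\rho G_{ij}a_ia_j$. Rather than needing joint concavity, I will use that an equilibrium choice must in particular be optimal in each coordinate separately, holding the agent's other own variables at their equilibrium values. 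Each one-dimensional problem is strictly concave: the coefficient of $a_i^2$ is $-1/2$, and the coefficient of $g_{ij}^2$ is $-f_{ij}/2$ with $f_{ij}>0$. So necessary first-order (KKT) conditions characterize each coordinate.

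For actions, I will differentiate $\mathcal{U}_i$ in $a_i$:
\[
\frac{\partial \mathcal{U}_i}{\partial a_i} = b_i + \beta_i + \rho\sum_{j\neq i} G^*_{ij}a^*_j - a_i .
\]
The key point, and the only step that needs an argument, is that the nonnegativity constraint $a_i\ge 0$ never binds. At $a_i=0$ the derivative equals $b_i+\beta_i+\rho\sum_j G^*_{ij}a^*_j$. This is strictly positive because $b_i>0$, $\beta_i\ge 0$, $\rho>0$, $G^*_{ij}\ge 0$ and $a^*_j\ge 0$. Hence the optimum is interior, and the derivative vanishes at $a^*_i$. Stacking these conditions over $i$, and using $G^*_{ii}=0$, gives $(\bm{I}-\rho\bm{G}^*)\bm{a}^*=\bm{b}+\bm{\beta}$, which is \eqref{eq:action_FOC}.

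For links, $g_{ij}$ enters $\mathcal{U}_i$ only through $G_{ij}=g_{ij}+g_{ji}$, and $\partial G_{ij}/\partial g_{ij}=1$. It appears in the spillover term, the intrinsic term, the subsidy term, and the own cost $f_{ij}g_{ij}^2/2$. Differentiating gives
\[
\rho a^*_i a^*_j + s_{ij} + \sigma_{ij} - f_{ij} g_{ij}.
\]
Because $s_{ij}$ may be negative, the constraint $g_{ij}\ge 0$ can bind, which is why the statement restricts to link intensities whose first-order condition binds, i.e.\ the derivative equals zero. For those intensities, solving the condition immediately gives \eqref{eq:link_FOC}.

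I will also remark that when $\rho a^*_ia^*_j+s_{ij}+\sigma_{ij}\le 0$ the corner $g^*_{ij}=0$ obtains, so that the full condition is $g^*_{ij}=\max\{0,\cdot\}$. I expect no real obstacle in this proof; the only care needed is the interiority argument for actions and the correct bookkeeping of which terms contain $G_{ij}$.
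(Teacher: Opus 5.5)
Your proposal is correct and takes essentially the paper's approach. The paper gives no separate proof, but in the proof of Fact 1 it derives the same coordinate-wise best responses from strict concavity of $\mathcal{U}_i$ in $a_i$ and in each $g_{ij}$, obtaining $a_i^* = b_i + \rho\sum_{j} G_{ij}a_j$ and $g_{ij}^* = \max\{0, (\rho a_i a_j + s_{ij})/f_{ij}\}$ (there without subsidies). Your explicit check that $a_i \geq 0$ never binds because $b_i>0$, and your remark that only necessity (not joint concavity) is needed, make precise points the paper leaves implicit.
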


Given equilibrium link choices, the characterization of equilibrium actions in (\ref{eq:action_FOC}) matches \cite*{BCZ-06}. The main difference, of course, is that links are chosen endogenously in our model, and (\ref{eq:link_FOC}) describes equilibrium link intensities. We will now use this characterization to compute the welfare effects of perturbing a subsidy. 

A helpful insight is that closely related spillover dynamics govern the effects of changing action and link subsidies, despite these instruments targeting different objects.

\begin{proposition}
    \label{l:welfarederivativeendogenous}
    Suppose the equilibrium selection $(\bm{a}^*(\bm{\beta},\bm{\sigma}), \bm{g}^*(\bm{\beta},\bm{\sigma}))$ is differentiable in a neighborhood of $(\bm{\beta},\bm{\sigma})$. If $g_{ij}^*>0$, then the derivatives of equilibrium welfare in $\beta_{i}$, $\beta_j$, and $\sigma_{ij}$ are related by 
    \begin{align*}
        \frac{d\mathcal{W}}{d \sigma_{ij}} = \frac{\rho}{f_{ij}} \left( \frac{d\mathcal{W}}{d \beta_{i}} a_{j}^* + \frac{d\mathcal{W}}{d \beta_{j}} a_{i}^* \right).
    \end{align*}
\end{proposition}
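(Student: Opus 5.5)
The plan is to differentiate the equilibrium conditions of \Cref{p:eqlbendo} along the differentiable selection, and to show that a marginal increase in $\sigma_{ij}$ acts on the action block exactly like the combined action subsidy $\frac{\rho}{f_{ij}}(a_j^*\, d\beta_i + a_i^*\, d\beta_j)$. The chain rule through $\mathcal{W}(\bm{a}^*)$ then gives the stated identity. The key observation is that $\sigma_{ij}$ enters the equilibrium system only through the first-order condition for $g_{ij}$. There it shifts $g_{ij}$ by $d\sigma_{ij}/f_{ij}$ and hence shifts $G_{ij}$ by the same amount. In the action equations \eqref{eq:action_FOC}, a shift $dG_{ij}$ in turn enters row $i$ as $\rho a_j^*\, dG_{ij}$ and row $j$ as $\rho a_i^*\, dG_{ij}$, exactly as shifts of $\beta_i$ and $\beta_j$ would.

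Concretely, I will write the equilibrium locally as a system $F(\bm{a},\bm{g};\bm{\beta},\bm{\sigma})=0$. Its first block is $(\bm{I}-\rho\bm{G})\bm{a}-\bm{b}-\bm{\beta}=0$. Its second block is $f_{kl}g_{kl}-\rho a_k a_l - s_{kl}-\sigma_{kl}=0$ for links with binding first-order conditions, and $g_{kl}=0$ for the remaining links. Since $g_{ij}^*>0$ and the selection is continuous, the link $ij$ stays in the binding set nearby, so \eqref{eq:link_FOC} holds identically for $g_{ij}$ in a neighborhood. Differentiating along the selection gives $J\, \frac{d(\bm{a},\bm{g})}{d\theta} = -\partial_\theta F$ for $\theta\in\{\beta_i,\beta_j,\sigma_{ij}\}$, where $J$ is the Jacobian of $F$ in $(\bm{a},\bm{g})$. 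Note that $-\partial_{\beta_i}F = e_i$ in the action block, and $-\partial_{\sigma_{ij}}F$ is the unit vector in the $g_{ij}$ equation.

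Next I construct the candidate derivative explicitly. Let $w$ be the vector with $\bm{a}$-component zero, $g_{ij}$-component $1/f_{ij}$, and all other $g$-components zero. Then I compute $Jw$. In the $g_{ij}$ row it gives $1$, and in every other link row it gives $0$, because those rows do not involve $g_{ij}$. In action row $i$ it gives $-\rho a_j^*/f_{ij}$ and in row $j$ it gives $-\rho a_i^*/f_{ij}$. Set
\[
x=\frac{\rho}{f_{ij}}\Bigl(a_j^*\tfrac{d(\bm a,\bm g)}{d\beta_i}+a_i^*\tfrac{d(\bm a,\bm g)}{d\beta_j}\Bigr)+w .
\]
Then
\[
Jx=\frac{\rho}{f_{ij}}(a_j^* e_i + a_i^* e_j)+Jw=-\partial_{\sigma_{ij}}F,
\]
so $x$ satisfies the same linearized equation as $d(\bm a,\bm g)/d\sigma_{ij}$. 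Because $w$ has zero action component, if $x$ is the derivative then
\[
\frac{d\bm{a}^*}{d\sigma_{ij}} = \frac{\rho}{f_{ij}}\Bigl(a_j^*\frac{d\bm a^*}{d\beta_i}+a_i^*\frac{d\bm a^*}{d\beta_j}\Bigr),
\]
and applying $\nabla\mathcal{W}(\bm a^*)$ gives the proposition.

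The main obstacle is justifying that $x$ actually equals the derivative, that is, that the linearized system pins down the derivative uniquely. This has two parts. The first part is invertibility of $J$, which I would argue from $\rho$ small or from local uniqueness of the selection. The second, more delicate part is the links $kl\neq ij$ sitting at the corner $g_{kl}^*=0$ with a first-order condition that binds exactly. For these I would use that $g_{kl}=\max\{0,(\rho a_ka_l+s_{kl}+\sigma_{kl})/f_{kl}\}$ in equilibrium. Differentiability of the selection then forces the derivative of $g_{kl}$ to be zero in every direction, including directions that raise $\beta_k$ or $\beta_l$. So for all three perturbations these links contribute a zero row, and they can be treated as fixed at $0$, consistently across $\beta_i$, $\beta_j$ and $\sigma_{ij}$. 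With the active set handled this way, the reduced Jacobian is the object whose invertibility I need. If direct invertibility is awkward, I would fall back on an alternative. That alternative uses the fact that the perturbed intervention $(\beta_i+\epsilon\rho a_j^*/f_{ij},\ \beta_j+\epsilon\rho a_i^*/f_{ij})$ together with a shift of $g_{ij}$ agrees to first order with the equilibrium under $\sigma_{ij}+\epsilon$, and it invokes the assumed differentiability of the selection to identify the two derivatives.
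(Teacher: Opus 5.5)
Your approach is in substance the paper's. You differentiate the conditions of \Cref{p:eqlbendo} along the selection and compare the linear system for $\sigma_{ij}$ with the systems for $\beta_i$ and $\beta_j$. Your computation of $Jw$ and $Jx$ is correct. Your treatment of knife-edge links ($g_{kl}^*=0$ with $\rho a_k^*a_l^*+s_{kl}+\sigma_{kl}=0$) is also more careful than the paper's, which only sets derivatives of non-binding links to zero; you use the fact that a differentiable nonnegative function vanishing at a point has zero derivative there.

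The gap is the step you flag yourself. Identifying $x$ with $d(\bm{a}^*,\bm{g}^*)/d\sigma_{ij}$ needs $J$ to be injective, and neither of your proposed justifications supplies this. Small $\rho$ is not a hypothesis of the proposition. Local uniqueness of the selection is not assumed, and it would not suffice anyway: $x^3-\theta^3=0$ has the unique, differentiable solution $x=\theta$, yet its Jacobian in $x$ is singular at $\theta=0$. Your fallback meets the same obstruction. A first-order solution of the perturbed conditions can be matched to the derivative of the selected equilibrium only if the linearized system has a unique solution.

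The missing idea is that differentiability in every action-subsidy direction already forces invertibility. This works because $\bm{\beta}$ enters the action block as $-\bm{I}$, unlike $\theta$ in the cubic example. For each $k=1,\dots,n$ you have $J\,d(\bm{a}^*,\bm{g}^*)/d\beta_k=(\bm{e}_k,\bm{0})$. Since $J_{gg}$ is diagonal with entries $f_{kl}$ or $1$, eliminating the link block gives $(\bm{I}-\rho\bm{\mathcal{M}})D=\bm{I}$. Here $\bm{I}-\rho\bm{\mathcal{M}}=J_{aa}-J_{ag}J_{gg}^{-1}J_{ga}$ is the Schur complement, which is exactly the paper's matrix. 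The matrix $D$ is the $n\times n$ matrix with columns $d\bm{a}^*/d\beta_k$. A square matrix with a right inverse is invertible. So if $Jz=0$, then $z_g=-J_{gg}^{-1}J_{ga}z_a$ and $(\bm{I}-\rho\bm{\mathcal{M}})z_a=0$, hence $z=0$, and your candidate $x$ is the derivative.

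The paper uses this fact implicitly when it passes from differentiability of the selection to $d\bm{a}^*/d\beta_i=(\bm{I}-\rho\bm{\mathcal{M}})^{-1}_i$. With this step added, your proof is complete.
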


We now provide some intuition for this result. First suppose the planner marginally increases the subsidy for $i$'s action. What is the effect on welfare? The direct effect is that agent $i$ responds by increasing his action $a_i$. This then has various spillovers: agent $i$ now increases his link intensities $g_{ij}$ with other agents; these agents increase their action $a_j$ and link intensity $g_{ji}$ to agent $i$. Each of these changes also induces further spillovers, etc. The derivative $\frac{d\mathcal{W}}{d \beta_{i}}$ captures the overall change in equilibrium welfare due to these cascading effects.

Now suppose the planner instead increases the subsidy for $i$'s link to $j$. Then agent $i$ invests more in his bilateral link intensity to $j$. Since the overall link between $i$ and $j$ is stronger, they increase their actions $a_i$ and $a_j$. These in turn have spillover effects as in the previous paragraph, but the key point is that these spillovers induce the same total change in welfare. So we can write $\frac{d\mathcal{W}}{d\sigma_{ij}}$ as a linear combination of $\frac{d\mathcal{W}}{d \beta_{i}}$ and $\frac{d\mathcal{W}}{d \beta_{j}}$. This basic insight also appears in \cite{li2023designing} where a planner can intervene to change exogenous links, and the proposition points out that it extends to our model with endogenous links.

The proposition compares the benefits to the planner from action and link subsidies, but the full trade-off of course also depends on the costs of the two instruments. Increasing a subsidy imposes the cost of the new subsidy but also induces spillovers that make existing subsidies more expensive. In the next section, we will analyze optimal interventions by  comparing both the benefits and costs of subsidies.

\section{Optimal Interventions With Endogenous Spillovers}
\label{s:endonetworkform}

This section asks whether the optimal subsidy targets actions or links. Our main result identifies a region where it is optimal to focus only on subsidizing actions and a complementary region where including link subsidies can be optimal.


Our analysis relies on two regularity conditions at the optimal subsidy scheme. First, we assume that the equilibrium strategy is differentiable with respect to small changes in the subsidy. Because the endogenous link formation model may admit multiple equilibria for a given subsidy scheme, this assumption requires that, at the optimum, the selected equilibrium varies smoothly with local perturbations in subsidies. Second, we assume that the planner's budget constraint binds. This condition rules out perverse cases in which an increase in subsidies induces all agents to reduce both their actions and link intensities. These perverse cases correspond to unstable equilibria, and one could alternately refine our equilibrium concept to require stability.

\begin{assumption} \label{as:regularity}
   At the optimal subsidy $\left(\bm{\beta}^*,\bm{\sigma}^*\right)$, the following conditions hold: \begin{enumerate}
       \item the equilibrium selection $(\bm{a}^*(\bm{\beta},\bm{\sigma}),\bm{g}^*(\bm{\beta},\bm{\sigma}))$ is differentiable in a neighborhood of $(\bm{\beta}^*,\bm{\sigma}^*)$, and, 
       \item the planner's budget constraint strictly binds.
   \end{enumerate} 
\end{assumption}

We can now present our main results. We find that the qualitative structure of an intervention is governed by the baseline incentive $s_{ij}$ to form a link. Recall that this baseline incentive is the marginal value to agent $i$ of increasing $G_{ij}$ from zero if all agents' actions are zero.

\begin{theorem} \label{t:endogenousmodel}
    Let $\left(\bm{\beta}^*,\bm{\sigma}^*\right)$ be an optimal subsidy satisfying \Cref{as:regularity}. If $s_{ij} \geq 0$, the optimal intervention does not subsidize the link between $i$ and $j$: $\sigma_{ij}^*=\sigma_{ji}^*=0$.
\end{theorem}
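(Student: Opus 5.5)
The plan is a first-order (KKT) argument combined with the linear relation of \Cref{l:welfarederivativeendogenous}, extended from welfare to the payment $\mathcal{P}$. Suppose for contradiction that $\sigma_{ij}^*>0$ with $s_{ij}\ge 0$. By \Cref{as:regularity}, the selected equilibrium is differentiable near the optimum and the budget binds. The optimum then satisfies Lagrangian conditions with a multiplier $\lambda\ge 0$. For every instrument $x\in\{\beta_k,\sigma_{kl}\}$ we have $\frac{d\mathcal{W}}{dx}\le \lambda\frac{d\mathcal{P}}{dx}$, with equality whenever $x>0$. The goal is to show that, for the link $ij$, $\frac{d\mathcal{P}}{d\sigma_{ij}}$ strictly exceeds the same combination of action-subsidy costs that \Cref{l:welfarederivativeendogenous} uses for benefits. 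That would make $\frac{d\mathcal{W}}{d\sigma_{ij}}<\lambda \frac{d\mathcal{P}}{d\sigma_{ij}}$, contradicting $\sigma_{ij}^*>0$.

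The first step is to check that $g_{ij}^*>0$, so that \Cref{l:welfarederivativeendogenous} applies. If $g_{ij}^*=0$, agent $i$'s first-order condition would require $\rho a_i^*a_j^*+s_{ij}+\sigma_{ij}\le 0$. This is impossible, because $a^*\ge 0$, $s_{ij}\ge 0$ and $\sigma_{ij}>0$. Hence \eqref{eq:link_FOC} holds for $g_{ij}^*$.

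The second step is to redo the spillover argument behind \Cref{l:welfarederivativeendogenous} for all endogenous variables rather than only for $\mathcal{W}$. A perturbation $d\sigma_{ij}$ directly raises $g_{ij}$, and hence $G_{ij}$, by $d\sigma_{ij}/f_{ij}$. Through \eqref{eq:action_FOC}, this enters the action equations exactly like action shocks of $\rho a_j^*/f_{ij}$ to agent $i$ and $\rho a_i^*/f_{ij}$ to agent $j$. Writing $c=\rho/f_{ij}$, I therefore expect
\[
D_{\sigma_{ij}}\bm a=c\,(a_j^*D_{\beta_i}\bm a+a_i^*D_{\beta_j}\bm a),\qquad D_{\sigma_{ij}}\bm G=c\,(a_j^*D_{\beta_i}\bm G+a_i^*D_{\beta_j}\bm G)+\tfrac{1}{f_{ij}}\bm e_{ij}.
\]
Here $\bm e_{ij}$ marks the extra direct increase in the entry $G_{ij}=G_{ji}$.

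The third step differentiates $\mathcal{P}=\sum_k\beta_k a_k+\sum_{k\ne l}\sigma_{kl}G_{kl}$. Each derivative splits into a direct term plus the induced changes weighted by existing subsidies: the direct term is $a_i^*$ for $\beta_i$ and $G_{ij}^*$ for $\sigma_{ij}$. Combining with the second step gives
\[
\frac{d\mathcal{P}}{d\sigma_{ij}}-c\Big(a_j^*\frac{d\mathcal{P}}{d\beta_i}+a_i^*\frac{d\mathcal{P}}{d\beta_j}\Big)=G_{ij}^*+\frac{\sigma_{ij}+\sigma_{ji}}{f_{ij}}-\frac{2\rho a_i^*a_j^*}{f_{ij}}.
\]
Next I substitute $G_{ij}^*=g_{ij}^*+g_{ji}^*$ using \eqref{eq:link_FOC}. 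If $g_{ji}^*>0$ as well, the right-hand side equals $2(s_{ij}+\sigma_{ij}+\sigma_{ji})/f_{ij}$, which is strictly positive. If instead $g_{ji}^*=0$, its first-order condition forces $\rho a_i^*a_j^*\le -s_{ij}-\sigma_{ji}\le 0$, so $a_i^*a_j^*=0$. The right-hand side is then at least $g_{ij}^*>0$.

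Finally, I combine the pieces:
\[
\frac{d\mathcal{W}}{d\sigma_{ij}}=c\Big(a_j^*\frac{d\mathcal{W}}{d\beta_i}+a_i^*\frac{d\mathcal{W}}{d\beta_j}\Big)\le \lambda c\Big(a_j^*\frac{d\mathcal{P}}{d\beta_i}+a_i^*\frac{d\mathcal{P}}{d\beta_j}\Big)<\lambda\frac{d\mathcal{P}}{d\sigma_{ij}}.
\]
The equality is \Cref{l:welfarederivativeendogenous}. The weak inequality uses the KKT inequalities for $\beta_i,\beta_j$ together with $a^*\ge 0$. The strict inequality needs $\lambda>0$. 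I would argue this holds because $\mathcal{W}$ is strictly increasing: if $\lambda=0$, then $\frac{d\mathcal{W}}{d\sigma_{ij}}=0$, which together with the strictly positive direct effects should give a contradiction. This contradicts the equality required by $\sigma_{ij}^*>0$. The same argument applies to $\sigma_{ji}^*$.

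I expect the main obstacle to be the second step: justifying rigorously that the induced response to $\sigma_{ij}$ is exactly that linear combination of the responses to $\beta_i,\beta_j$ plus the direct link bump. This must be done through the implicit-function structure of the selected equilibrium, including links whose first-order conditions are slack. The other delicate point is the KKT justification under the nonnegativity constraints, including ruling out $\lambda=0$.
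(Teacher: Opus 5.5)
Your proposal is correct and is essentially the paper's proof. The paper likewise combines \Cref{l:welfarederivativeendogenous} with the payment identity \eqref{eq:paymentderivativeendogenous} inside the KKT conditions, using $\frac{d\mathcal{W}}{d\beta_k}\le\lambda\frac{d\mathcal{P}}{d\beta_k}$ for $k=i,j$ to get $\frac{d\mathcal{L}}{d\sigma_{ij}}\le -\frac{2\lambda}{f_{ij}}(s_{ij}+\sigma_{ij}+\sigma_{ji})<0$. Two simplifications are available. First, the paper takes $\lambda>0$ directly from the budget-binding part of \Cref{as:regularity}; your monotonicity argument would not suffice on its own, because the equilibrium response matrix need not have an entrywise nonnegative inverse at an unstable equilibrium, which is exactly the case that assumption is meant to exclude. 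Second, your case $g_{ji}^*=0$ is vacuous: $a_k^*\ge b_k>0$ for every $k$, so when $s_{ij}\ge 0$ both link first-order conditions hold with equality.
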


The theorem states that when agents have a positive baseline incentive to form links ($s_{ij} \geq 0$), the planner does not subsidize link formation and focuses on action subsidies. This holds despite various sources of underinvestment in links, including agents' incentives to free-ride on bilateral links. 

What if the baseline link incentive $s_{ij}$ is negative? In this case, we can give sufficient conditions for the optimal intervention to subsidize the link $G_{ij}$.

\begin{proposition} \label{p:negativebaselineendo}
Let $\left(\bm{\beta}^*,\bm{\sigma}^*\right)$ be an optimal subsidy satisfying \Cref{as:regularity}. Consider any pair of agents $i$ and $j$ such that $s_{ij} < 0$ and $g_{ij}^*,g_{ji}^*>0$. If $\beta_{i}^*>0$ and $\beta_{j}^*>0$, then  $\sigma_{ij}^* + \sigma_{ji}^*>0$.
\end{proposition}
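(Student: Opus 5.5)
The plan is to argue by contradiction using a budget-neutral perturbation. Suppose $s_{ij}<0$, $g_{ij}^*,g_{ji}^*>0$, $\beta_i^*,\beta_j^*>0$, but $\sigma_{ij}^*=\sigma_{ji}^*=0$. Because $\beta_i^*,\beta_j^*>0$, the planner can slightly raise $\sigma_{ij}$ and lower $\beta_i$ and $\beta_j$ in some proportion. I will show that some such perturbation keeps the payment $\mathcal{P}$ fixed to first order while strictly raising $\mathcal{W}$. This contradicts optimality, given the differentiability and binding budget in \Cref{as:regularity}. Equivalently, I will verify that at a regular optimum there is a multiplier $\lambda$ with $\frac{d\mathcal{W}}{d\beta_k}=\lambda\frac{d\mathcal{P}}{d\beta_k}$ for $k=i,j$ (interior instruments) and $\frac{d\mathcal{W}}{d\sigma_{ij}}\le\lambda\frac{d\mathcal{P}}{d\sigma_{ij}}$ (the instrument at its lower bound). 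I will then show the last inequality fails.

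\textbf{Step 1: welfare side.} This is given by \Cref{l:welfarederivativeendogenous}:
\[
\frac{d\mathcal{W}}{d\sigma_{ij}}=\frac{\rho}{f_{ij}}\Big(a_j^*\frac{d\mathcal{W}}{d\beta_i}+a_i^*\frac{d\mathcal{W}}{d\beta_j}\Big).
\]

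\textbf{Step 2: payment side.} I need the analogous identity for $\mathcal{P}$, and it has correction terms. Using \Cref{p:eqlbendo}, a marginal increase in $\sigma_{ij}$ shifts $g_{ij}$ directly by $1/f_{ij}$ at fixed actions. In the action condition \eqref{eq:action_FOC}, this acts exactly like a shock $\frac{\rho}{f_{ij}}(a_j^*e_i+a_i^*e_j)$ to $\bm b+\bm\beta$. A marginal increase in $\beta_k$ acts as a shock $e_k$ to the same vector, plus the direct payment term $a_k^*$.

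Hence the full equilibrium response to $\sigma_{ij}$ equals that linear combination of the responses to $\beta_i,\beta_j$, except for the direct extra $1/f_{ij}$ in $G_{ij}$. That extra term is paid at rate $\sigma_{ij}+\sigma_{ji}=0$. Carrying the direct terms through, I expect
\[
\frac{d\mathcal{P}}{d\sigma_{ij}}=G_{ij}^*+\frac{\sigma_{ij}+\sigma_{ji}}{f_{ij}}+\frac{\rho}{f_{ij}}\Big(a_j^*\Big(\frac{d\mathcal{P}}{d\beta_i}-a_i^*\Big)+a_i^*\Big(\frac{d\mathcal{P}}{d\beta_j}-a_j^*\Big)\Big).
\]
Now impose $\sigma_{ij}=\sigma_{ji}=0$ and use \eqref{eq:link_FOC} for both binding intensities, which gives $G_{ij}^*=2(\rho a_i^*a_j^*+s_{ij})/f_{ij}$. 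The identity then collapses to
\[
\frac{d\mathcal{P}}{d\sigma_{ij}}=\frac{2s_{ij}}{f_{ij}}+\frac{\rho}{f_{ij}}\Big(a_j^*\frac{d\mathcal{P}}{d\beta_i}+a_i^*\frac{d\mathcal{P}}{d\beta_j}\Big).
\]

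\textbf{Step 3: combine.} Substituting the first-order conditions for $\beta_i,\beta_j$ into Step 1 gives
\[
\frac{d\mathcal{W}}{d\sigma_{ij}}=\lambda\Big(\frac{d\mathcal{P}}{d\sigma_{ij}}-\frac{2s_{ij}}{f_{ij}}\Big)>\lambda\frac{d\mathcal{P}}{d\sigma_{ij}}
\]
whenever $\lambda>0$ and $s_{ij}<0$. This contradicts the inequality for the instrument at its bound. In perturbation language: raise $\sigma_{ij}$ by $\varepsilon$ and cut $\beta_i,\beta_j$ by $\varepsilon\rho a_j^*/f_{ij}$ and $\varepsilon\rho a_i^*/f_{ij}$ respectively. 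To first order, welfare is unchanged and payment falls by $2\varepsilon|s_{ij}|/f_{ij}$. Spending that slack on an action subsidy then raises $\mathcal{W}$.

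\textbf{Main obstacle.} The delicate step is establishing $\lambda>0$, equivalently that freed budget can be converted into strictly higher welfare. I would first try to use the binding budget in \Cref{as:regularity}. If spending more on $\beta_i$ did not raise $\mathcal{W}$ (no positive multiplier), then optimality together with the budget binding strictly should force a contradiction. The paper's remark that the binding assumption rules out the perverse unstable case, where more subsidy lowers all actions, suggests this is its intended role. I would also double-check the bookkeeping in Step 2, namely that the payment derivatives from $\beta_k$ and from the induced $\bm b$-shock differ exactly by the direct term $a_k^*$.
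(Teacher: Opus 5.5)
Your proposal is correct and matches the paper's proof. The paper likewise reduces $d\mathcal{L}/d\sigma_{ij}$, via the same $\beta$/$\sigma$ spillover relation (your Steps 1--2, i.e.\ the welfare identity together with \eqref{eq:paymentderivativeendogenous}), to $\frac{\rho}{f_{ij}}\bigl(a_j^*\,d\mathcal{L}/d\beta_i + a_i^*\,d\mathcal{L}/d\beta_j\bigr) - 2\lambda(s_{ij}+\sigma_{ij}+\sigma_{ji})/f_{ij}$, and then uses the interior KKT conditions for $\beta_i,\beta_j$ to get a strictly positive derivative at $\sigma_{ij}=\sigma_{ji}=0$. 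The step you flag as the main obstacle, $\lambda>0$, is not derived in the paper: it simply takes the strictly binding budget in the second part of \Cref{as:regularity} to mean $\lambda>0$.
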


If the baseline incentive to form a link is negative ($s_{ij} < 0$), then the planner does not subsidize the actions of two connected agents $i$ and $j$ without also subsidizing their link.

Crucially, both these comparisons do not depend on parameters that may seem more relevant. Of particular note, the result is independent of the coefficients governing the cost of link formation, $f_{ij}$. For instance, the optimal policy does not subsidize link formation when $s_{ij} \geq 0$ even if $f_{ij}$ is much smaller than the curvature of the cost of actions. Moreover, the characterization is also robust to the choice of welfare objective.

We next describe the intuition underlying these results. Perturbing a link subsidy can generate substantially larger welfare effects than perturbing an action subsidy (see \Cref{l:welfarederivativeendogenous}). But when baseline incentives are non-negative ($s_{ij} \geq 0$),  subsidizing actions is a strictly cheaper way of inducing proportional spillovers in equilibrium (at the optimal subsidy). Stronger links may be very beneficial, but it turns out to be optimal for the planner to subsidize actions and let agents form these links endogenously. In contrast, when baseline incentives are negative ($s_{ij} < 0$), agents underinvest in link formation, and the planner optimally offsets this distortion by subsidizing links.

To formalize this, recall when agents play strategy profile $(\bm{a}^*,\bm{g}^*)$, the total payments made by the planner are: \begin{align*}
    \mathcal{P} = \sum_{i \in \mathcal{N}}\beta_{i}a_{i}^* + \sum_{i \in \mathcal{N}} \sum_{j \in \mathcal{N} \setminus \{i\}}\sigma_{ij}G_{ij}^*.
\end{align*} By differentiating this expression and manipulating terms, we obtain a relation between the cost of perturbing an action subsidy and the cost of perturbing a link subsidy (when link intensities $g_{ij}^*,g_{ji}^*>0$): \begin{align} \label{eq:paymentderivativeendogenous}
        \frac{d\mathcal{P}}{d \sigma_{ij}} - \frac{\rho}{f_{ij}} \left( \frac{d\mathcal{P}}{d \beta_{i}} a_{j}^* + \frac{d\mathcal{P}}{d \beta_{j}} a_{i}^* \right) = \underbrace{\frac{2\left(s_{ij} + \sigma_{ij} + \sigma_{ji}\right)}{f_{ij}}}_{\mathcal{D}_{ij}}.
    \end{align}

The intuition for (\ref{eq:paymentderivativeendogenous}) is very similar to the intuition for \Cref{l:welfarederivativeendogenous}, which analogously established a relation between the welfare effect of perturbing an action and a link subsidy. For a comparison involving planner payments, the key difference is the additional term $\mathcal{D}_{ij}$ on the right-hand side. When the planner perturbs a link subsidy, she needs to pay for the existing link level $G_{ij}^*$ in addition to payments due to changes in equilibrium actions. The term $\mathcal{D}_{ij}$ accounts for change in pay involving the existing link.  

When $s_{ij} \geq 0$, the right-hand side $\mathcal{D}_{ij}$ is non-negative. This implies that a subsidy to link formation is strictly more costly than an action subsidy yielding the same benefit, making action subsidies the preferred instrument. In contrast, when $s_{ij} < 0$, equilibrium link intensities under zero link subsidies are small. In this case, the planner optimally exploits the low marginal cost of inducing link formation by subsidizing links. 

\medskip

\Cref{t:endogenousmodel} provides conditions ensuring the optimal intervention only subsidizes actions. One might wonder whether this is because the objective $\mathcal{W}(\mathbf{a})$ depends only on actions and not links. We next give an example where the optimal intervention  includes action subsidies even when the planner's objective depends only on links. So the optimal policy does not merely subsidize the type of effort valued by the planner; rather, \Cref{t:endogenousmodel} captures a meaningful difference between links and actions.


\begin{example}\label{ex:link_objective}
    In this example, we modify our welfare function to depend on links rather than actions. Suppose the welfare function \begin{align*}
    \mathcal{W}(\bm{G}) = \sum_{i \in \mathcal{N}} \sum_{j \in \mathcal{N} \setminus \{i\}}G_{ij}
\end{align*} is the sum of link weights.

We consider a planner budget $\mathcal{B} = 0.01$. Individual incentives to take actions and form links are uniformly sampled as \begin{align*}
  b_{i} & \sim U([0,\overline{b}]) \quad \text{ for every } i \in \mathcal{N}, \\ \text{and} \quad s_{ij} & \sim U([0,\overline{s}]) \text{ for every pair } i,j \in \mathcal{N}, 
\end{align*} with $\overline{b}=\overline{s}=0.01$. The cost of link formation $f_{ij}= 10$ for every pair $i$ and $j$, and the cost of action is $10$ for every agent $i$. Finally, we set the parameter $\rho$ controlling the strength of complementarities to $1$. We numerically solve for the optimal subsidy in this environment, selecting the equilibrium $(\bm{a}^*,\bm{g}^*)$ reached by starting with $(\bm{a},\bm{g}) = (\bm{0},\bm{0})$ and iteratively taking best responses. \Cref{fig:subsidyperformanceratio} illustrates the role of subsidizing actions in an optimal intervention.

\begin{figure}
    \centering
    \includegraphics[width=0.5\linewidth]{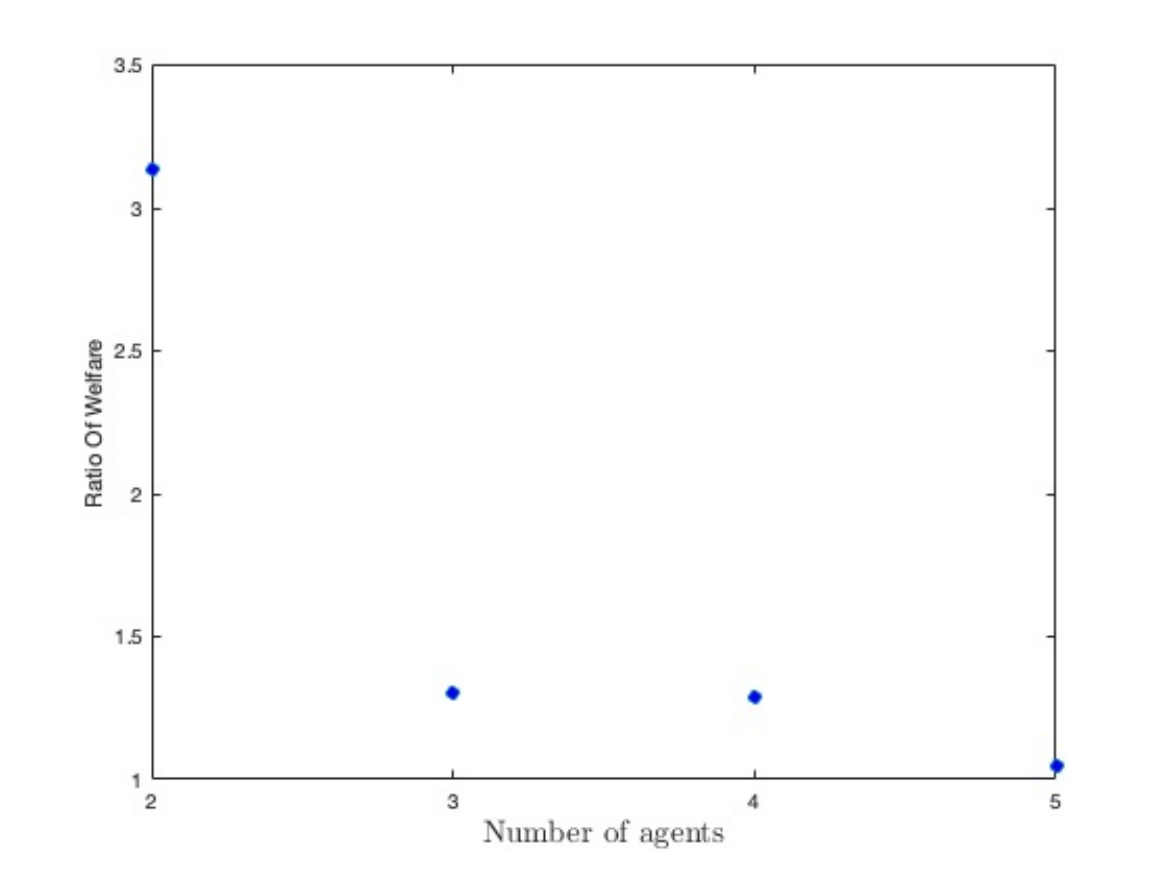}
    \caption{Ratio of welfare of the optimal subsidy to the welfare of the optimal subsidy which only subsidizes links as a function of the number of agents.}
    \label{fig:subsidyperformanceratio}
\end{figure}

The figure plots the average of the ratio of welfare under the optimal subsidy to welfare under the best subsidy restricted to links only, where the average is taken over $20$ samples. A ratio exceeding one indicates that the optimal intervention necessarily includes action subsidies. We see that only subsidizing links can be far from optimal. This reinforces the role of action subsidies: they generate spillovers at a lower cost than link subsidies when link formation is endogenous. 


\end{example}

\section{Optimal Interventions In A Benchmark Model}
\label{s:exonetworkform}

We next compare our results on endogenous link formation in \Cref{s:endonetworkform} with a benchmark model where links are essentially exogenous: agents choose links prior to the network game and without considering its payoffs. In this benchmark, the qualitative structure of the optimal intervention is reversed. In particular, when baseline incentives to form links are high, optimal interventions only subsidize links and not actions.

 
We now describe the benchmark model. The timing is different: agents first publicly choose link intensities $g_{ij}$ and then choose actions $a_i$. The link intensities $\bm{g}$ determine a symmetric network of spillovers $G_{ij} = G_{ji} = g_{ij} + g_{ji}$ as before. Each agent $i$ chooses each link intensity $g_{ij}$ to maximize 
$$ s_{ij} G_{ij} + \sigma_{ij} G_{ij} - \frac{f_{ij}g_{ij}^2}{2},$$
where $s_{ij}$ is a linear benefit from links and $\sigma_{ij}$ is a subsidy parameter chosen by the planner. The linear benefit from links $s_{ij}$ and the curvature of link intensity costs $f_{ij}$ is symmetric across a pair of agents: $s_{ij} = s_{ji}$ and $f_{ij} = f_{ji}$. After links realize, agents play a standard network game (with payoffs specified below).

Importantly, this benchmark assumes agents do not consider payoffs in the subsequent network game when choosing links. This may be a reasonable model in settings where agents' primary motivations for forming links are different from the outcome of interest to the planner. For example, in some settings students may form network ties for primarily social reasons while policymakers may care about educational outcomes affected by those networks.

The planner can also choose action subsidies $\beta_i$, and player $i$'s utility in the intervention game is  $$\mathcal{U}_{i}(\bm{a},\bm{g}) = (b_{i}+\beta_i)a_{i} + \rho \sum_{j \in \mathcal{N} \setminus \{i\}}G_{ij}a_{i}a_{j} - \frac{a_{i}^2}{2}.$$
Here $b_i > 0$ captures an agent's individual incentive to exert effort in the absence of subsidies.

As before, given a subsidy scheme, agents play a pure strategy Nash equilibrium $\left(\bm{a}^*,\bm{g}^*\right)$ and we fix an equilibrium selection $(\bm{a}^*(\bm{\beta},\bm{\sigma}), \bm{g}^*(\bm{\beta},\bm{\sigma}))$. If an equilibrium does not exist given some subsidy scheme we set $\mathcal{W}$ to $-\infty$, and as before we can rule out non-existence by taking the costs of links to be sufficiently large. The planner chooses an intervention $(\bm{\beta},\bm{\sigma})$ to maximize a welfare function $\mathcal{W}(\bm{a}^*)$, which remains differentiable and strictly increasing in each argument, subject to a budget $\mathcal{B}$.

We want to characterize the optimal intervention when baseline incentives to form links are sufficiently high. To do so, we consider a sequence of environments indexed by $m \in \mathbb{N}$. Along this sequence, we assume that all incentives $b_{i}$ lie in an interval $\left[\underline{b},\overline{b}\right]$ with $\overline{b} \geq \underline{b} > 0$, all cost coefficients $f_{ij}$ lie in an interval $\left[\underline{f},\overline{f}\right]$ with $\overline{f} \geq \underline{f} > 0$, the complementarity parameter $\rho$ is bounded away from zero, and the budgets are bounded above. For each $m$, we define the \emph{zero-subsidy network} $\mathbf{G}_m^{(0)}$ to be the equilibrium network in the absence of subsidies. We will analyze the case when the zero-subsidy networks have high spillovers, as captured by the following assumption:

\begin{assumption} \label{as:highspillovers} The sequence of zero-subsidy networks $\{\bm{G}_{m}^{(0)}\}_{m \in \mathbb{N}}$ satisfies:
    \begin{enumerate}
        \item (Uniform Primitivity) There exist $L \in \mathbb{N}$ and $\delta>0$ such that for every matrix $\bm{G}_{m}^{(0)}$ in the sequence, \begin{align*}
       \frac{\left(\bm{G}_{m}^{(0)}\right)^{L}}{\left(\mu_{1}\left(\bm{G}_{m}^{(0)}\right)\right)^{L}} \geq \delta \bm{1} \bm{1}^{\top}
    \end{align*} entrywise, where $\mu_{1}\left(\bm{G}^{(0)}_{m}\right)$ is the spectral radius of $\bm{G}^{(0)}_{m}$. 
    \item The spectral radius of $\rho \bm{G}_{m}^{(0)}$ converges to $1$:\begin{align*}
        \lim_{m \to \infty} \mu_{1}\left(\rho\bm{G}_{m}^{(0)}\right) = 1.
    \end{align*}
    \end{enumerate}
\end{assumption}


We briefly interpret each condition. For part (a), an unweighted graph is defined to be primitive if there exists an integer $L$ such that there is a walk of length $L$ between any two agents. The condition extends this to weighted graphs by requiring that the total weights of such walks, measured by the appropriate entry of $\left(\bm{G}_{m}^{(0)}\right)^{L}$, be sufficiently large. We moreover require that the integer $L$ can be chosen uniformly across the sequence. Part (b) requires the spectral radius of the sequence of zero-subsidy networks, adjusted to account for the complementarity parameter, converges to $1$. As this spectral radius approaches $1$, spillovers and therefore equilibrium actions grow very large.

The main result of this section establishes that in such a high spillover regime, the planner does not subsidize actions and only focuses on link subsidies:

\begin{theorem} \label{t:benchmarkmodelquadratic}
    Consider any sequence of environments and corresponding sequence of zero-subsidy networks $\{\bm{G}_{m}^{(0)}\}_{m \in \mathbb{N}}$ satisfying \Cref{as:highspillovers}. For $m$ large enough, any optimal subsidy $\left(\bm{\beta}^*,\bm{\sigma}^*\right)$ has $\beta_{i}^*=0$ for every agent $i \in \mathcal{N}$.
\end{theorem}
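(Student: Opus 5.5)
The plan is a direct perturbation (exchange) argument. In the benchmark everything is explicit, so the equilibrium is unique and smooth and no regularity assumption is needed. Given $(\bm\beta,\bm\sigma)$, each link intensity solves a concave problem with solution $g_{ij}=(s_{ij}+\sigma_{ij})/f_{ij}$, so
\begin{align*}
G_{ij}=\frac{2s_{ij}+\sigma_{ij}+\sigma_{ji}}{f_{ij}}, \qquad \bm a^*=\bm M(\bm b+\bm\beta), \qquad \bm M:=(\bm I-\rho\bm G)^{-1},
\end{align*}
whenever $\mu_1(\rho\bm G)<1$. Feasibility of any candidate optimum requires this, so $\bm M\ge 0$ and $\bm M$ is symmetric.

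Suppose, for contradiction, that for large $m$ some optimal subsidy has $\beta_i^*>0$, and fix any $j\neq i$. Consider the direction that raises $\sigma_{ij}$ by $\varepsilon$ and lowers $\beta_i$ by $\varepsilon\rho a_j^*/f_{ij}$. This is feasible for small $\varepsilon$ precisely because $\beta_i^*>0$.

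\textbf{Welfare along the direction.} Since $d\bm a/d G_{ij}=\rho\bm M(e_ia_j+e_ja_i)$, $dG_{ij}/d\sigma_{ij}=1/f_{ij}$, and $d\bm a/d\beta_k=\bm Me_k$, the analogue of \Cref{l:welfarederivativeendogenous} holds here:
\begin{align*}
\frac{d\mathcal W}{d\sigma_{ij}}=\frac{\rho}{f_{ij}}\left(\frac{d\mathcal W}{d\beta_i}a_j^*+\frac{d\mathcal W}{d\beta_j}a_i^*\right).
\end{align*}
Hence the directional derivative of welfare is $(\rho a_i^*/f_{ij})\,d\mathcal W/d\beta_j$. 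This is strictly positive, because $\mathcal W$ is strictly increasing and $\bm Me_j\ge e_j$.

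\textbf{Payments along the direction.} In the benchmark links do not react to action subsidies, so $d\mathcal P/d\beta_k=a_k^*+(\bm M\bm\beta)_k$. A link subsidy raises the bill through three channels:
\begin{enumerate}
\item the payment $G_{ij}$ on the existing link;
\item the extra payments $(\sigma_{ij}+\sigma_{ji})/f_{ij}$ caused by the change in $G_{ij}$;
\item action payments $\bm\beta^\top d\bm a/d\sigma_{ij}$.
\end{enumerate}
Collecting terms gives the benchmark version of \eqref{eq:paymentderivativeendogenous}:
\begin{align*}
\frac{d\mathcal P}{d\sigma_{ij}}-\frac{\rho}{f_{ij}}\left(\frac{d\mathcal P}{d\beta_i}a_j^*+\frac{d\mathcal P}{d\beta_j}a_i^*\right)=\frac{2\left(s_{ij}+\sigma_{ij}+\sigma_{ji}-\rho a_i^*a_j^*\right)}{f_{ij}}.
\end{align*}
Unlike the endogenous model, the term $\rho a_i^* a_j^*$ is not cancelled by a link first-order condition. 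The directional derivative of payments is therefore this right-hand side plus $(\rho a_i^*/f_{ij})\,d\mathcal P/d\beta_j$. Using symmetry of $\bm M$, we have $(\bm M\bm\beta)_j=a_j^*-(\bm M\bm b)_j$. Substituting, the $a_j^*$ terms cancel and the directional derivative of payments equals
\begin{align*}
\frac{1}{f_{ij}}\Big(2(s_{ij}+\sigma_{ij}^*+\sigma_{ji}^*)-\rho a_i^*(\bm M\bm b)_j\Big).
\end{align*}

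It remains to show this is negative for large $m$. Then the perturbation strictly raises welfare while lowering payments, which contradicts optimality. The key steps are as follows.
\begin{enumerate}
\item The first term is uniformly bounded. The entries $2s_{ij}/f_{ij}$ of $\bm G^{(0)}_m$ are bounded because $\mu_1(\rho\bm G^{(0)}_m)\to1$ and $\rho$ is bounded below. Since $f_{ij}\ge\underline f$, this bounds $s_{ij}$. The budget gives $\sigma_{ij}G_{ij}\le\mathcal B$ with $G_{ij}\ge\sigma_{ij}/f_{ij}$, so $\sigma_{ij}^2\le\overline f\mathcal B$.
\item The second term is at least $\rho\,\underline b\cdot\underline b\,(\bm M\bm 1)_j$, because $a_i^*\ge b_i\ge\underline b$ and $\bm b\ge\underline b\bm 1$. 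We therefore need $\min_j(\bm M\bm 1)_j\to\infty$.
\end{enumerate}

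The main obstacle is the last claim, which is where \Cref{as:highspillovers} enters. Since $\bm\sigma^*\ge0$, we have $\bm G^*\ge\bm G^{(0)}_m$ entrywise, hence $\bm M^*\ge\bm M^{(0)}:=(\bm I-\rho\bm G^{(0)}_m)^{-1}$. Let $\bm v$ be the Perron vector of $\bm G^{(0)}_m$ normalized so that $\max_k v_k=1$. Uniform primitivity gives
\begin{align*}
\bm v=\frac{(\bm G^{(0)}_m)^L\bm v}{\mu_1^L}\ge\delta\bm 1\bm 1^\top\bm v\ge\delta\bm 1,
\end{align*}
so $\bm v$ lies between $\delta\bm 1$ and $\bm 1$. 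Then $\bm M^{(0)}\bm 1\ge\bm M^{(0)}\bm v=\bm v/(1-\mu_1(\rho\bm G^{(0)}_m))\ge\delta\bm 1/(1-\mu_1(\rho\bm G^{(0)}_m))$, which diverges uniformly in $j$ by part (b). This makes the payment derivative negative for all large $m$, uniformly over candidate optima. Hence $\beta_i^*=0$ for every $i$.
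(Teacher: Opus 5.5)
Your argument is correct after two small repairs (below), and it takes a different and somewhat cleaner route than the paper. Both proofs move spending from $\beta_i$ to $\sigma_{ij}$. The paper, however, proves that the budget-neutral cross term $\Delta=\frac{d\mathcal W}{d\sigma_{ij}}\frac{d\mathcal P}{d\beta_i}-\frac{d\mathcal W}{d\beta_i}\frac{d\mathcal P}{d\sigma_{ij}}$ is positive. In the paper's notation ($\mathcal S_k=d\mathcal W/d\beta_k$, $\mathcal T_k=(\bm M\bm\beta)_k$) one has
\[
\Delta=\frac{\rho a_i^*}{f_{ij}}\,\mathcal S_j\bigl(\mathcal T_i+a_i^*\bigr)+\frac{\mathcal S_i}{f_{ij}}\Bigl(\rho a_i^*(\bm M\bm b)_j-2\bigl(s_{ij}+\sigma_{ij}+\sigma_{ji}\bigr)\Bigr).
\]
The paper drops the positive piece $\mathcal S_i\,\rho a_i^*(\bm M\bm b)_j/f_{ij}$ and shows that $\rho(a_i^*)^2\mathcal S_j/f_{ij}$ beats $2\mathcal S_i(s_{ij}+\sigma_{ij}+\sigma_{ji})/f_{ij}$. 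This forces it to compare $\mathcal S_i$ with $\mathcal S_j$, using the bound $M_{kl}\le(1-\rho\mu_1(\bm G^*))^{-1}$ from positive semidefiniteness and primitivity lower bounds on $\mathcal S_j$ and $\bm a^*$. It then has to win a race between terms of order $(1-\rho\mu_1(\bm G^*))^{-2}$ and $(1-\rho\mu_1(\bm G^*))^{-1}$.

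Your direction isolates exactly the second summand. Its welfare derivative $D_{\mathcal W}=\rho a_i^*\mathcal S_j/f_{ij}$ needs only a sign. Its payment derivative $D_{\mathcal P}$ is $-1/f_{ij}$ times the bracket, so a single divergent lower bound on $(\bm M\bm b)_j$ (via the Perron vector of $\bm G^{(0)}_m$ and $\bm M^*\ge\bm M^{(0)}$) suffices. Since $\Delta=D_{\mathcal W}\frac{d\mathcal P}{d\beta_i}-\mathcal S_i D_{\mathcal P}$, your conclusion implies the paper's. It also gives a dominance (welfare up, spending strictly down) that needs no upper bound on $\bm M$, no control of the size of $\nabla\mathcal W$, and no binding budget.

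\textbf{First repair.} The choice of $j$ cannot be arbitrary. Since $g_{ij}^*=\max\{0,(s_{ij}+\sigma_{ij}^*)/f_{ij}\}$, for $s_{ij}<0$ three things can go wrong. Your closed form for $G_{ij}$ can be wrong, and the perturbation may leave $G_{ij}$ unchanged. Your step-1 bounds also break: $G_{ij}\ge\sigma_{ij}/f_{ij}$ fails, and with $s_{ij}$ very negative and $\sigma_{ij}^*,\sigma_{ji}^*$ just above $-s_{ij}$, the budget no longer bounds $s_{ij}+\sigma_{ij}^*+\sigma_{ji}^*$. Take $j$ with $s_{ij}>0$ instead. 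Such a $j$ exists because uniform primitivity makes row $i$ of $\bm G^{(0)}_m$ nonzero, and this is exactly the choice the paper makes.

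\textbf{Second repair.} A strictly increasing differentiable $\mathcal W$ can have vanishing partial derivatives, so $\bm M\bm e_j\ge\bm e_j$ only gives $D_{\mathcal W}\ge 0$. Use instead $\bm M\ge(\rho\bm G^{(0)}_m)^L$, which is entrywise positive for large $m$. Then the first-order change $\frac{\rho a_i^*}{f_{ij}}\bm M\bm e_j$ in actions is strictly positive in every coordinate. For small $\varepsilon$ the new action profile therefore strictly dominates $\bm a^*$, and strict monotonicity gives a strict welfare gain. This also covers the case $\nabla\mathcal W=0$, which the paper handles only with a brief continuity remark.
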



This result stands in sharp contrast to the structure of an optimal subsidy scheme in the endogenous network formation model (see \Cref{t:endogenousmodel}). In that model, the planner does not subsidize link formation when the baseline incentives $s_{ij}$ are positive.

We now provide some intuition for the theorem. An agent's first-order condition is \begin{align*}
    a_i = (b_i + \beta_i) + \rho \sum_{j \in \mathcal{N}\backslash \{i\} }G_{ij}a_j.
\end{align*} In the high-spillover regime, equilibrium actions grow large. So link subsidies become very valuable due to the terms $G_{ij}a_j$ in first-order conditions. Moreover the  costs of link subsidies remain bounded as $m \rightarrow \infty$. As actions grow large, on the other hand, subsidies to actions become very costly. While the benefits to action subsidies may also grow, the proof confirms that link subsidies must give a better cost-benefit tradeoff.

A caveat to the theorem, which limits its scope, is that as $m \rightarrow \infty$, the budget $\mathcal{B}$ must vanish for an optimal intervention to exist. This is because the planner can implement arbitrarily large actions if the budget is large enough to induce equilibrium networks $\bm{G}^*$ with spectral radius at least $\frac{1}{\rho}$. One can provide necessary and sufficient conditions for subsidizing a link outside of this limit. Since these conditions involve endogenous quantities, however, we focus on the cleaner limit result.



\section{Optimal Subsidies In General Environments}\label{s:generalization}

The analysis thus far has relied on quadratic functional form assumptions. We now ask what structure drives each part of our main result. To do so, we describe a more general model and provide sufficient conditions for analogues of \Cref{t:endogenousmodel} and \Cref{p:negativebaselineendo}.

Agents continue to take real-valued actions $a_{i} \geq 0$. Each agent $i$ also continues to choose link intensities $g_{ij} \geq 0$ with other agents $j \in \mathcal{N} \setminus \{i\}$. The link intensities $g_{ij}$ and $g_{ji}$ determine the strength $G_{ij}=G_{ji}=g_{ij}+g_{ji}$ of an undirected link between $i$ and $j$. The utility of agent $i$ from action profile $\bm{a}$ and link intensities $\bm{g}$ is \begin{align*}
U_{i}\left(\bm{a},\bm{g}\right) = b_{i}a_{i} + \rho \sum_{j \in \mathcal{N} \setminus \{i\}}G_{ij}u_{ij}(a_{i}){u}_{ji}(a_{j}) - C_{i}(a_{i}) \\
     + \sum_{j \in \mathcal{N} \setminus \{i\}} s_{ij}G_{ij} - \sum_{j \in \mathcal{N} \setminus \{i\}}F_{ij}(g_{ij}). 
\end{align*} We now describe each term in the utility function:

\begin{itemize}
    \item The standalone marginal benefit from taking an action is $b_{i} \in \mathbb{R}_{> 0}$. The cost is given by a twice differentiable, strictly increasing and strictly convex function $C_{i} : \mathbb{R}_{\geq 0} \to \mathbb{R}_{\geq 0}$. The marginal cost at zero action is zero: $C_{i}'(0)=0$.
    \item The standalone baseline incentive to form a link is $s_{ij} \in \mathbb{R}$ and the link incentive is symmetric across pairs: $s_{ij} = s_{ji}$. The cost is given by a twice differentiable, strictly increasing, and strictly convex function $F_{ij} : \mathbb{R}_{\geq 0} \to \mathbb{R}_{\geq 0}$. The marginal cost at a zero link intensity is zero, i.e., $F_{ij}'(0)=0$. The link intensity cost is symmetric across a pair of agents: $F_{ij}(\cdot) = F_{ji}(\cdot)$.
    \item Spillovers depend on individual actions: they are determined by the actions of $i$ and $j$ through twice differentiable and strictly increasing functions $u_{ij} : \mathbb{R}_{\geq 0} \to \mathbb{R}_{\geq 0}$. The spillovers are zero for a zero action: $u_{ij}(0)=0$.
\end{itemize}
We can recover the baseline model by taking all $u_{ij}(a_i)$ to be the identity, the costs $C_i(a_i) = a_i^2/2$, and $F_{ij}(g_{ij}) = f_{ij} g_{ij}^2/2$.


If the planner chooses subsidies $\left(\bm{\beta},\bm{\sigma}\right)$, then agent $i$'s utility is \begin{align*}
    \mathcal{U}_{i} = U_{i}\left(\bm{a},\bm{g}\right) + \beta_{i}a_{i} + \sum_{j \in \mathcal{N} \setminus \{i\}} \sigma_{ij}G_{ij}. 
\end{align*} Once the planner commits to an intervention, agents play a pure strategy Nash equilibrium $(\bm{a}^*,\bm{g}^*)$. The assumptions on equilibrium existence and equilibrium selection are identical to those in \Cref{s:model}. 

To extend our results beyond the linear-quadratic setting, we will introduce several properties of functions. We begin with the costs of forming links. 

\begin{definition}
    A function $F(\cdot)$ is \begin{enumerate}
        \item super-quadratic, if $GF''(G) \geq F'(G)$ for all  $G \geq 0$ and $F'(G)$ is convex.
        \item sub-quadratic, if $GF''(G) \leq F'(G)$ for all $G \geq 0$.
    \end{enumerate}
\end{definition}

To provide intuition for these conditions, suppose that $F_{ij}(\cdot)$ takes the form $$
    F_{ij}(g) = f_{ij}g^{\gamma}$$ for some positive $f_{ij}$. Then $F_{ij}(\cdot)$ is super-quadratic if $\gamma \geq 2$ and  sub-quadratic if $\gamma \in (1,2]$.
    
    We next introduce properties on the function $u_{ij}(\cdot)$, which governs the role of actions in spillovers. 

\begin{definition}
    A function $u(\cdot)$ is \begin{enumerate}
        \item super-linear, if $u(a) \leq au'(a)$ for all $a \geq 0$.
        \item sub-linear, if 
            $u(a) \geq au'(a)$ for all $a \geq 0$.
    \end{enumerate}
\end{definition}

A simple example satisfying these conditions is $$u_{ij}(a) = \omega_{ij} a^{\kappa}$$
for $\omega_{ij}>0$.
In this case, the function $u_{ij}(\cdot)$ is super-linear if $\kappa \geq 1$ and sub-linear if $\kappa \in (0,1]$.

We can now present our generalization of \Cref{t:endogenousmodel}:

\begin{theorem}
    \label{t:baselineincentivegeneral}
  Let $\left(\bm{\beta}^*,\bm{\sigma}^*\right)$ be an optimal subsidy satisfying \Cref{as:regularity}. Suppose $s_{ij} \geq 0$. If
        \begin{enumerate}[(a)]
            \item the link formation cost functions $F_{ij}(\cdot)$ are super-quadratic, and
            \item the payoff functions $u_{ij}(\cdot)$ are sub-linear,
        \end{enumerate} then the optimal link subsidy is $\sigma_{ij}^*=\sigma_{ji}^*=0$.
\end{theorem}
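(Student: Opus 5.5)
We follow the structure of the proof of \Cref{t:endogenousmodel}: we compare the welfare and payment effects of a link subsidy with those of a suitable combination of action subsidies. Then we argue by contradiction that a positive $\sigma_{ij}^*$ cannot satisfy the planner's first-order (KKT) conditions.

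\emph{Step 1: equilibrium conditions.} In the general model, the action first-order condition is
\begin{align*}
b_i+\beta_i+\rho\sum_{k}G_{ik}u_{ik}'(a_i)u_{ki}(a_k)=C_i'(a_i).
\end{align*}
For a link with $g_{ij}^*>0$, the link first-order condition is
\begin{align*}
F_{ij}'(g_{ij})=\rho u_{ij}(a_i)u_{ji}(a_j)+s_{ij}+\sigma_{ij}.
\end{align*}
Suppose $\sigma_{ij}^*>0$. Then $g_{ij}^*>0$, since $F'_{ij}(0)=0$ and the right-hand side is positive. Case on whether $g_{ji}^*>0$; if $g_{ji}^*=0$, the argument below simplifies.

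\emph{Step 2: welfare relation.} Perturbing $\sigma_{ij}$ enters the system only through agent $i$'s link condition. Implicit differentiation gives $dg_{ij}=d\sigma_{ij}/F''_{ij}(g_{ij})$ at fixed actions. In $i$'s action condition, this acts like an action-subsidy shift of $\rho u_{ij}'(a_i)u_{ji}(a_j)/F''_{ij}$, and similarly for $j$. Since the equilibrium map is differentiable by \Cref{as:regularity}, we obtain the analogue of \Cref{l:welfarederivativeendogenous}:
\begin{align*}
\frac{d\mathcal W}{d\sigma_{ij}}=\frac{\rho}{F''_{ij}(g_{ij})}\Big(u_{ij}'(a_i)u_{ji}(a_j)\frac{d\mathcal W}{d\beta_i}+u_{ji}'(a_j)u_{ij}(a_i)\frac{d\mathcal W}{d\beta_j}\Big).
\end{align*}
The equivalence holds because both perturbations induce the same change in the equilibrium system once the direct link change is accounted for. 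Differentiating $\mathcal P$ in the same way gives the analogue of (\ref{eq:paymentderivativeendogenous}):
\begin{align*}
\frac{d\mathcal P}{d\sigma_{ij}}-\big(\text{same combination of }\tfrac{d\mathcal P}{d\beta_i},\tfrac{d\mathcal P}{d\beta_j}\big)=\mathcal D_{ij}.
\end{align*}
The residual $\mathcal D_{ij}$ collects three pieces: the direct payment $G_{ij}$ on the existing link, the term $(\sigma_{ij}+\sigma_{ji})\,\partial G_{ij}/\partial\sigma_{ij}$, and corrections from the $\beta_i a_i$ payments. These corrections now involve the mismatch between $a_iu'_{ij}(a_i)$ and $u_{ij}(a_i)$.

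\emph{Step 3: sign of $\mathcal D_{ij}$ (main obstacle).} We need $\mathcal D_{ij}>0$ whenever $s_{ij}\ge0$. In the quadratic case this was $2(s_{ij}+\sigma_{ij}+\sigma_{ji})/f_{ij}$. In general, the direct term is $G_{ij}\ge g_{ij}$. Super-quadraticity gives $g_{ij}\ge F'_{ij}(g_{ij})/F''_{ij}(g_{ij})$, so the direct term is bounded below by
\begin{align*}
\frac{\rho u_{ij}u_{ji}+s_{ij}+\sigma_{ij}}{F''_{ij}}.
\end{align*}
This is the piece that must dominate the negative contribution $-\rho u_{ij}u_{ji}/F''_{ij}$-type terms, which come from re-expressing the link-induced action shift via $\beta$ payments. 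Those terms carry factors $a_iu_{ij}'(a_i)$, and sub-linearity ($a u'\le u$) bounds them by $u_{ij}(a_i)$. With these two bounds, the remaining terms have the sign of $s_{ij}+\sigma_{ij}+\sigma_{ji}\ge0$, with strict positivity from $\sigma_{ij}^*>0$. Convexity of $F'$ is needed if $g_{ji}^*>0$, to handle the second intensity $g_{ji}$ and compare $F''(g_{ij})$ with $F''(g_{ji})$. I expect the careful bookkeeping in this step, keeping the two intensities separate, to be the hardest part.

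\emph{Step 4: conclusion.} With $\lambda>0$ the multiplier on the binding budget (\Cref{as:regularity}), KKT at an interior $\sigma_{ij}^*>0$ requires $d\mathcal W/d\sigma_{ij}=\lambda\, d\mathcal P/d\sigma_{ij}$. For the action subsidies, KKT requires $d\mathcal W/d\beta_k\le\lambda\, d\mathcal P/d\beta_k$. Combine these with the nonnegative weights above and use $\mathcal D_{ij}>0$:
\begin{align*}
\frac{d\mathcal W}{d\sigma_{ij}}\le\lambda\Big(\frac{d\mathcal P}{d\sigma_{ij}}-\mathcal D_{ij}\Big)<\lambda\frac{d\mathcal P}{d\sigma_{ij}}.
\end{align*}
This is a contradiction, so $\sigma_{ij}^*=0$. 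Symmetry gives $\sigma_{ji}^*=0$.
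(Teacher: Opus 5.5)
Your route is sound in outline and differs from the paper's. The paper adds $\partial\mathcal L/\partial\sigma_{ij}+\partial\mathcal L/\partial\sigma_{ji}$ and splits into the cases ``both link subsidies positive'' and ``exactly one positive''. It uses convexity of $F_{ij}'$ only in the second case, to get $\widetilde f_{ji}\ge\widetilde f_{ij}$. You instead aim for a single residual $\mathcal D_{ij}>0$. But Step 3, which carries the whole theorem, does not go through as written.

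Write $\widetilde f_{ij}=1/F_{ij}''(g_{ij}^*)$ and $U=u_{ij}(a_i^*)u_{ji}(a_j^*)$. Your residual is exactly $\mathcal D_{ij}=G_{ij}^*+\widetilde f_{ij}(\sigma_{ij}+\sigma_{ji})-\rho\widetilde f_{ij}\bigl(a_i^*u_{ij}'(a_i^*)u_{ji}(a_j^*)+a_j^*u_{ij}(a_i^*)u_{ji}'(a_j^*)\bigr)$. Sub-linearity bounds the last term by $-2\rho\widetilde f_{ij}U$: one copy from $\beta_i$ and one from $\beta_j$. Bounding $G_{ij}^*\ge g_{ij}^*\ge\widetilde f_{ij}F_{ij}'(g_{ij}^*)$ absorbs only one copy. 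It leaves $\widetilde f_{ij}(s_{ij}+2\sigma_{ij}+\sigma_{ji}-\rho U)$, which can be negative. So your claim that ``the remaining terms have the sign of $s_{ij}+\sigma_{ij}+\sigma_{ji}$'' is false at that point.

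The second copy must come from $g_{ji}^*$. What you need is $g_{ji}^*\ge\widetilde f_{ij}F_{ij}'(g_{ji}^*)$, with the curvature evaluated at $g_{ij}^*$, not at $g_{ji}^*$. Your convexity comparison of $F''_{ij}(g_{ij}^*)$ and $F''_{ij}(g_{ji}^*)$ delivers this only when $g_{ji}^*\le g_{ij}^*$, i.e.\ $\sigma_{ji}\le\sigma_{ij}$. When $\sigma_{ij}>0$ is the smaller subsidy, it points the wrong way. So ``suppose $\sigma_{ij}^*>0$ \dots\ symmetry gives $\sigma_{ji}^*=0$'' is not justified as stated.

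Likewise, the case $g_{ji}^*=0$ would not ``simplify''; the second copy would have nothing to cancel against. That case is vacuous only because $b_k>0$ and $C_k'(0)=0$ force $a_i^*,a_j^*>0$. Hence $U>0$ and $g_{ji}^*>0$ when $s_{ij}\ge 0$, and you should say so.

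The repair is short. If $g_{ji}^*>g_{ij}^*$, use $g_{ji}^*>g_{ij}^*\ge\widetilde f_{ij}(\rho U+s_{ij}+\sigma_{ij})$. If $g_{ji}^*\le g_{ij}^*$, note that $gF''(g)\ge F'(g)$ makes $F_{ij}'(g)/g$ nondecreasing, so $F_{ij}'(g_{ji}^*)/g_{ji}^*\le F_{ij}'(g_{ij}^*)/g_{ij}^*\le F_{ij}''(g_{ij}^*)$. Convexity of $F_{ij}'$ would also work in this case. Either way $g_{ji}^*\ge\widetilde f_{ij}(\rho U+s_{ij}+\min\{\sigma_{ij},\sigma_{ji}\})$, and with your Step 4 this gives
\[
\frac{d\mathcal L}{d\sigma_{ij}}\le-\lambda\mathcal D_{ij}\le-\lambda\widetilde f_{ij}\bigl(2s_{ij}+2\sigma_{ij}+\sigma_{ji}+\min\{\sigma_{ij},\sigma_{ji}\}\bigr).
\]
This is strictly negative whenever $\sigma_{ij}^*>0$, regardless of which subsidy is larger, so the symmetry step becomes legitimate. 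Completed this way, your route avoids the paper's case split. It also needs only $gF''\ge F'$ and sub-linearity, not convexity of $F_{ij}'$, which the paper's summed argument uses in the one-positive case.
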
 The theorem states that when $s_{ij} \geq 0$, and link formation costs grow sufficiently steeply, and spillovers from agents' actions grow sufficiently slowly, the optimal subsidy does not subsidize link formation. The intuition is that, under steep costs, link intensities respond weakly to marginal changes in link subsidies. At the same time, slowly growing spillovers dampen the return from strengthening existing links. Accordingly, the planner focuses on action subsidies. A surprising implication of the result is that the functional form of the costs of actions can be an arbitrary convex function. In particular, the cost of actions can be much steeper than the cost of link formation.

We can also generalize \Cref{p:negativebaselineendo}:

 \begin{proposition} \label{p:negativebaselineincentivegeneral}
  Let $\left(\bm{\beta}^*,\bm{\sigma}^*\right)$ be an optimal subsidy satisfying \Cref{as:regularity}. Suppose $s_{ij} < 0$. If the following are satisfied: \begin{enumerate}[(a)]
            \item the link formation cost function $F_{ij}(\cdot)$ is sub-quadratic, and
            \item the payoff functions $u_{ij}(\cdot)$ are super-linear, and
            \item the optimal action subsidies satisfy $\beta_i^*,\beta_j^*>0$, and the link intensities $g_{ij}^*,g_{ji}^*>0$,
        \end{enumerate} then the optimal link subsidy has $\sigma_{ij}^* + \sigma_{ji}^*>0$.
\end{proposition}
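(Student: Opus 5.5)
We argue by contradiction, mirroring the argument we would use for \Cref{p:negativebaselineendo}. Suppose $s_{ij}<0$, conditions (a)--(c) hold, and $\sigma_{ij}^*=\sigma_{ji}^*=0$. Since $\beta_i^*,\beta_j^*>0$ are interior and the budget binds (\Cref{as:regularity}), the KKT conditions give a multiplier $\lambda>0$ with
\[
\frac{d\mathcal{W}}{d\beta_i}=\lambda\frac{d\mathcal{P}}{d\beta_i},\qquad \frac{d\mathcal{W}}{d\beta_j}=\lambda\frac{d\mathcal{P}}{d\beta_j}.
\]
Because $\sigma_{ij}$ sits at the boundary $0$, optimality also requires $\frac{d\mathcal{W}}{d\sigma_{ij}}\le\lambda\frac{d\mathcal{P}}{d\sigma_{ij}}$. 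We will contradict this by showing $\frac{d\mathcal{W}}{d\sigma_{ij}}-\lambda\frac{d\mathcal{P}}{d\sigma_{ij}}>0$.

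The first step is the general analogue of \Cref{l:welfarederivativeendogenous}. The equilibrium conditions are the action FOCs $b_i+\beta_i+\rho\sum_k G_{ik}u_{ik}'(a_i)u_{ki}(a_k)=C_i'(a_i)$ and, for the interior links $g_{ij}^*,g_{ji}^*>0$, the link FOCs
\[
F_{ij}'(g_{ij})=\rho\, u_{ij}(a_i)u_{ji}(a_j)+s_{ij}+\sigma_{ij}.
\]
We will use the same device as in the quadratic case. Raising $\sigma_{ij}$ shifts only the FOC of $g_{ij}$, and the resulting change in $G_{ij}$ enters the action system exactly as a change in $\beta_i$ and $\beta_j$ would. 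The action system sees $G_{ij}$ only through the terms $\rho u_{ij}'(a_i)u_{ji}(a_j)$ in $i$'s FOC and $\rho u_{ji}'(a_j)u_{ij}(a_i)$ in $j$'s FOC. Applying implicit differentiation (guaranteed by \Cref{as:regularity}), we get
\[
\frac{d\mathcal{W}}{d\sigma_{ij}}=\frac{\rho}{F_{ij}''(g_{ij}^*)}\Big(u_{ij}'(a_i)u_{ji}(a_j)\frac{d\mathcal{W}}{d\beta_i}+u_{ji}'(a_j)u_{ij}(a_i)\frac{d\mathcal{W}}{d\beta_j}\Big).
\]
One subtlety needs checking: $G_{ij}$ responds to both link FOCs, and with equal $F_{ij}=F_{ji}$ and symmetric $s_{ij}$ we get $g_{ij}=g_{ji}$ when $\sigma$ is symmetric. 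The relevant factor is therefore $1/F''_{ij}$ evaluated at the common intensity. We would handle this with the same bookkeeping used for \eqref{eq:paymentderivativeendogenous}.

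The second step is the payment analogue of \eqref{eq:paymentderivativeendogenous}. Differentiating $\mathcal{P}$ and applying the same identity to the endogenous parts of the payment change leaves an extra direct term. That term comes from paying the subsidy on the existing link $G_{ij}^*$, and it produces
\[
\frac{d\mathcal{P}}{d\sigma_{ij}}-\frac{\rho}{F''_{ij}}\Big(u_{ij}'u_{ji}\frac{d\mathcal{P}}{d\beta_i}+u_{ji}'u_{ij}\frac{d\mathcal{P}}{d\beta_j}\Big)=G_{ij}^*+(\text{terms in }\sigma)=G_{ij}^*\quad\text{at }\sigma_{ij}=\sigma_{ji}=0.
\]
Combining this with the KKT equalities for $\beta_i,\beta_j$ gives
\[
\frac{d\mathcal{W}}{d\sigma_{ij}}-\lambda\frac{d\mathcal{P}}{d\sigma_{ij}}=-\lambda G_{ij}^*\cdot(\text{correction}).
\]
This is not yet signed, because the comparison in the proposition is between inducing spillovers through links and through actions, weighted by the levels $u_{ji}(a_j)$ and $a_j$. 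In the quadratic case the sign comes from $G_{ij}^*=2g^*$ with $f g^*=\rho a_ia_j+s_{ij}$, so that $G_{ij}^*-2\rho a_ia_j/f=2s_{ij}/f<0$. This requires subtracting the action-subsidy payment terms $\beta_k\,\partial a_k$ differently from the way written above. Concretely, we will redo the decomposition as in the derivation of $\mathcal{D}_{ij}$, expressing the leftover term as $\mathcal{D}_{ij}=G_{ij}^*-\frac{2}{F_{ij}''(g^*)}\,\rho\,\big(\text{spillover level}\big)$ with the spillover level built from $u_{ij}'(a_i)a_i\,u_{ji}(a_j)$ and its symmetric counterpart. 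Then $\mathcal{D}_{ij}<0$ implies the link subsidy is strictly profitable, which is the desired contradiction.

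The main obstacle is signing $\mathcal{D}_{ij}$, and this is exactly where (a) and (b) enter. By sub-quadraticity, $g F''(g)\le F'(g)$, so
\[
G_{ij}^*=2g^*\le \frac{2F'(g^*)}{F''(g^*)}=\frac{2(\rho u_{ij}(a_i)u_{ji}(a_j)+s_{ij})}{F''(g^*)}.
\]
By super-linearity, $u_{ij}(a_i)\le a_iu_{ij}'(a_i)$, so $\rho u_{ij}u_{ji}$ is bounded by the spillover level term. These two bounds give
\[
\mathcal{D}_{ij}\le \frac{2s_{ij}}{F''(g^*)}<0,
\]
using strict convexity of $F$ and $s_{ij}<0$. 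I expect the delicate part to be getting the exact form of the spillover level term right, so that each inequality points the correct way. In particular, the derivative identity may produce $a_i u_{ij}'(a_i)$ or $u_{ij}(a_i)$ depending on how the action-payment terms $\beta_k\,da_k$ are regrouped. We would first redo the quadratic computation leading to \eqref{eq:paymentderivativeendogenous} in general form and then match its terms against these two inequalities.
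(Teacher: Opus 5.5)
Your route is the paper's. Interior $\beta_i^*,\beta_j^*$ give $\frac{d\mathcal{W}}{d\beta_i}=\lambda\frac{d\mathcal{P}}{d\beta_i}$ and $\frac{d\mathcal{W}}{d\beta_j}=\lambda\frac{d\mathcal{P}}{d\beta_j}$ (the paper's $\mathcal{R}_i'=\lambda a_i^*$, $\mathcal{R}_j'=\lambda a_j^*$). The $\sigma_{ij}$-derivative is then written through the $\beta_i,\beta_j$-derivatives, and the leftover is signed using sub-quadraticity and super-linearity. Your welfare identity and your final signing step are correct.

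\textbf{The gap is the payment identity.} As written it is false, and the missing piece is not a matter of regrouping the $\beta_k\,da_k$ terms. Write $\widetilde f_{ij}=1/F_{ij}''(g_{ij}^*)$. Then $d\bm{a}^*/d\sigma_{ij}$ equals exactly $\rho\widetilde f_{ij}\bigl(u_{ij}'u_{ji}\,d\bm{a}^*/d\beta_i+u_{ij}u_{ji}'\,d\bm{a}^*/d\beta_j\bigr)$. Every induced link change other than the direct shift of $g_{ij}$ is a function of the action changes. So all endogenous payment terms $\beta_k\,da_k$ and $\sigma_{kl}\,dG_{kl}$ cancel identically, and there is no freedom left.

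What survives, besides $G_{ij}^*$ and the direct term $\widetilde f_{ij}(\sigma_{ij}+\sigma_{ji})$, is the direct term $a_i^*$ in $d\mathcal{P}/d\beta_i$ (and $a_j^*$ in $d\mathcal{P}/d\beta_j$). This is the cost of paying the marginal action subsidy on the existing action, and it is what you dropped. The correct identity is
\[
\frac{d\mathcal{P}}{d\sigma_{ij}}-\rho\widetilde f_{ij}\Big(u_{ij}'u_{ji}\,\frac{d\mathcal{P}}{d\beta_i}+u_{ij}u_{ji}'\,\frac{d\mathcal{P}}{d\beta_j}\Big)=G_{ij}^*+\widetilde f_{ij}(\sigma_{ij}+\sigma_{ji})-\rho\widetilde f_{ij}\bigl(a_i^*u_{ij}'u_{ji}+a_j^*u_{ij}u_{ji}'\bigr),
\]
with all $u$'s evaluated at $(a_i^*,a_j^*)$. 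Call the right-hand side $\mathcal{D}_{ij}$; in the quadratic case it reduces to \eqref{eq:paymentderivativeendogenous}.

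This settles your open question. The spillover term is $a_i^*u_{ij}'u_{ji}+a_j^*u_{ij}u_{ji}'$ with coefficient $\rho/F_{ij}''$. Your ``$2/F''$ times the spillover level'' is right only if the level means the average of the two terms. It is also exactly why super-linearity is needed on top of sub-quadraticity: if the term were $u_{ij}u_{ji}$, sub-quadraticity alone would suffice. With the corrected identity, at $\sigma_{ij}^*=\sigma_{ji}^*=0$ you get $\frac{d\mathcal{W}}{d\sigma_{ij}}-\lambda\frac{d\mathcal{P}}{d\sigma_{ij}}=-\lambda\mathcal{D}_{ij}$. Your two bounds then give $\mathcal{D}_{ij}\le 2s_{ij}/F_{ij}''(g^*)<0$, which contradicts the boundary KKT condition.

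Two smaller remarks. First, the ``subtlety'' about $G_{ij}$ responding to both link FOCs is not one. The factor $1/F_{ij}''(g_{ij}^*)$ comes only from the direct shift of $g_{ij}$'s FOC; $g_{ji}$ responds only through actions and is absorbed into the action system. Symmetry is needed only when you write $G_{ij}^*=2g^*$. That holds at $\sigma_{ij}=\sigma_{ji}=0$ because the two link FOCs coincide and $F_{ij}'$ is strictly increasing.

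Second, using the single derivative in $\sigma_{ij}$ at the hypothesized point is a legitimate simplification. The paper instead sums $\partial\mathcal{L}/\partial\sigma_{ij}+\partial\mathcal{L}/\partial\sigma_{ji}$ for general $\sigma$. It does so because it shares that computation with \Cref{t:baselineincentivegeneral}, where asymmetric link subsidies must be ruled out.
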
This result says that when $s_{ij} < 0$, and link formation costs do not grow too quickly and action complementarities grow steeply enough, the planner will not subsidize two connected agents' actions without subsidizing links. To provide some intuition, agents underinvest in link formation when $s_{ij} < 0$. With link formation costs that are not too steep, link intensities are more responsive to marginal changes in link subsidies. Furthermore, convex spillovers propagate the return from strengthening existing links. 

The results in this section generalize \Cref{t:endogenousmodel} and \Cref{p:negativebaselineendo} to broader classes of functional forms. Unlike in the quadratic case, the two parts of the theorem are no longer complementary as \Cref{t:baselineincentivegeneral} and \Cref{p:negativebaselineincentivegeneral} require different assumptions on curvature. Their purpose is instead to clarify the forces underlying each result.

\section{Conclusion} \label{s:conclusion}

We have considered a planner intervening in a network game with endogenous links, and compared subsidies to actions and links. When baseline incentives to form links are non-negative, the optimal intervention only subsidizes actions. The benefits to links can be very high, but it is most cost effective to subsidize actions and let agents form links. When baseline incentives to form links are negative, we give sufficient conditions for the optimal intervention to include link subsidies. Each of these results continues to hold for a wider non-parametric class of utility functions. 

We find that the structure of interventions is reversed when links are exogenous. In particular, optimal interventions include link subsidies when spillovers are large as induced by a non-negative baseline incentive to form links.

We view the contribution of this paper as being twofold: first, highlighting the primitives of the environment that govern the qualitative properties of optimal subsidies and second, demonstrating the structural difference in subsidies between endogenous and exogenous links. These insights suggest that a policymaker  intervening in network settings with peer effects benefits from carefully tailoring the intervention to agents' motivations to form links.  

\bibliographystyle{ecta}
\bibliography{references}

\appendix

\section{Omitted Proofs}
\label{a:omittedproofs}

\begin{fact}
\label{fact:eqlbf} There exists a $\widetilde{\rho} > 0$ such that if $\rho < \widetilde{\rho}$, then a Nash equilibrium exists under the zero subsidy $\left(\bm{\beta},\bm{\sigma}\right) = (\bm{0},\bm{0})$. 
\end{fact}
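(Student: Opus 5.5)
The plan is to prove existence of equilibrium at zero subsidy with a fixed-point argument on a compact convex set, by reducing the game to its best-response map. With $\bm{\beta}=\bm{\sigma}=0$, agent $i$'s utility is strictly concave in his own choice $(a_i,(g_{ij})_j)$ once the others' choices are held fixed. Indeed, the Hessian in $a_i$ is $-1$, the Hessian in each $g_{ij}$ is $-f_{ij}$, and the cross terms $\rho a_j$ are small when $\rho$ is small and $a_j$ is bounded. Joint concavity can fail when $\rho a_j$ is large. So instead of relying on concavity, I will build a map from the first-order conditions of \Cref{p:eqlbendo} and check that its fixed points are equilibria.

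\emph{Step 1 (the map).} Define
\[
T(\bm{a},\bm{g}) = (\bm{a}',\bm{g}'),\qquad a_i' = b_i + \rho\sum_{j\neq i}(g_{ij}+g_{ji})a_j,\qquad g_{ij}' = \max\Bigl\{0,\ \frac{\rho a_i a_j + s_{ij}}{f_{ij}}\Bigr\}.
\]
Each component of $T$ is the exact best response of agent $i$ in that coordinate, holding everything else fixed. This holds because utility is strictly concave in $a_i$ alone and in each $g_{ij}$ alone, and is additively separable across the $g_{ij}$ once $\bm{a}$ is fixed.

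\emph{Step 2 (invariant box).} Let $\bar b=\max_i b_i$, $\bar s=\max\{0,\max s_{ij}\}$ and $\underline f=\min f_{ij}$. Take the box
\[
K=\bigl\{\bm{a},\bm{g}\ge 0:\ a_i\le A,\ g_{ij}\le \Gamma\bigr\},\qquad A=2\bar b,\quad \Gamma = \frac{\rho A^2+\bar s}{\underline f}.
\]
Then $T$ maps $K$ into itself provided
\[
\bar b + 2\rho (n-1)\Gamma A \le A .
\]
The left side equals $\bar b + O(\rho)$, because $\Gamma$ stays bounded as $\rho\to 0$. Hence the inequality holds for all $\rho<\widetilde\rho$, for some $\widetilde\rho>0$. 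Since $T$ is continuous and $K$ is compact and convex, Brouwer's theorem gives a fixed point $(\bm{a}^*,\bm{g}^*)$.

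\emph{Step 3 (fixed point is an equilibrium; the main obstacle).} Coordinatewise optimality does not imply joint optimality for agent $i$, so this step carries the real content. At the fixed point, agent $i$ faces the problem of maximizing
\[
(b_i + \rho\textstyle\sum_j G_{ij}a_j^*)\,a_i - \tfrac{a_i^2}{2} + \rho\textstyle\sum_j g_{ij}a_ia_j^* + \textstyle\sum_j \bigl(s_{ij}g_{ij} - f_{ij}g_{ij}^2/2\bigr)
\]
over $(a_i,(g_{ij})_j)$, with $G_{ij}=g_{ij}+g_{ji}^*$. The Hessian of this objective is
\[
\begin{pmatrix} -1 & \rho \bm{a}_{-i}^{*\top}\\ \rho\bm{a}_{-i}^* & -\mathrm{diag}(f_{ij})\end{pmatrix}.
\]
It is negative definite whenever $\rho^2\sum_j (a_j^*)^2/f_{ij}<1$. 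On $K$ this is implied by $\rho^2(n-1)A^2/\underline f<1$, which again holds after shrinking $\widetilde\rho$. So agent $i$'s problem is strictly concave on the convex set $\{a_i\ge 0,\ g_{ij}\ge 0\}$. Its KKT conditions are exactly the coordinatewise conditions encoded in $T$, so they are sufficient for a global maximum. Therefore the fixed point is a pure strategy Nash equilibrium.

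One more check is needed: the global best response must not leave $K$. This is not an issue, because Nash equilibrium requires optimality only against deviations in the full strategy space, and strict concavity on the whole orthant already gives that.
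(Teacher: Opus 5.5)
Your proposal is correct and follows the paper's proof: Brouwer's theorem on a box with actions bounded by $2\bar b$, then the same Hessian condition $\rho^2\sum_j (a_j^*)^2/f_{ij}<1$ to show that the coordinatewise fixed point is a Nash equilibrium. The only difference is cosmetic: the paper substitutes the link best responses into the action map and applies Brouwer on $[0,R]^n$, where you work on the joint $(\bm{a},\bm{g})$ box. One small slip: in Step 3 your displayed objective counts $\rho g_{ij}a_ia_j^*$ twice, since $G_{ij}=g_{ij}+g_{ji}^*$ already contains it, but the Hessian you then write is the correct one.
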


\begin{proof} Consider the subsidy $\left(\bm{\beta},\bm{\sigma}\right) = (\bm{0},\bm{0})$ and any agent $i \in \mathcal{N}$. Fix actions and links of other agents denoted by $\left(\bm{a}_{-i},\bm{g}_{-i}\right)$. Suppose the agents optimize actions $a_{i}$ and link intensities $g_{ij}$ individually. We will establish the existence of a fixed-point of the agents' best-response correspondence. The utility of an agent is strictly concave in $a_{i}$ and the unique optimal action is \begin{align*}
        a_{i}^* = b_{i} + \rho \sum_{j \in \mathcal{N} \setminus \{i\}}G_{ij}a_{j}. 
    \end{align*} Furthermore, an agent's utility is strictly concave in link intensity $g_{ij}$. The unique optimal link intensity is \begin{align} \label{eq:optlinkintensity}
        g_{ij}^* = \max \left \{0,\frac{\left(\rho a_{i}a_{j} + s_{ij} \right)}{f_{ij}} \right \}. 
    \end{align} 
    
    Thus, it suffices to establish a fixed point of the best-response map $\Phi : \mathbb{R}^{n}_{\geq 0} \to \mathbb{R}_{\geq 0}^{n}$ defined as \begin{align*}
        \left(\Phi(\bm{a})\right)_{i} := b_{i} + \rho \sum_{j \in \mathcal{N} \setminus \{i\}} G_{ij}a_{j} \quad \text{ for every } i \in \mathcal{N}, 
    \end{align*} where $G_{ij} = g_{ij} + g_{ji}$ with $g_{ij}$ defined as in (\ref{eq:optlinkintensity}). We define $\overline{s} := \max_{i,j}|s_{ij}|$ and $\overline{b} := \max_{i}b_{i}$. Consider the following upper bound on the best-response of any agent: \begin{align*}
        \left(\Phi(\bm{a})\right)_{i} & \leq \overline{b} + \rho \sum_{j \in \mathcal{N} \setminus \{i\}}|G_{ij}|a_{j}, \\
        & \leq \overline{b} + \rho \sum_{j \in \mathcal{N} \setminus \{i\}} \frac{2\rho a_{i} a_{j} + 2\overline{s}}{f_{ij}} a_{j}. \\
    \end{align*} Given curvatures $f_{ij}$, define $\underline{f} := \min_{i,j}f_{ij}$. Suppose there exists a $R>0$ such that \begin{align} \label{eq:Rbound}
        \overline{b} + \frac{2 \rho^2}{\underline{f}}nR^3 + \frac{2 \rho \overline{s}}{\underline{f}}nR < R.
    \end{align} Then, applying Brouwer's fixed point theorem on the compact and convex domain $[0,R]^{n}$, the continuous map $\Phi(\cdot)$ has a fixed point in $[0,R]^{n}$. When $R = 2\overline{b}$ and \begin{align*}
        \rho < \frac{\sqrt{n^2 \overline{s}^2 + 4 \underline{f} n \overline{b}^2} - n\overline{s}}{8n\overline{b}^2},
    \end{align*} equation (\ref{eq:Rbound}) is satisfied. We conclude that there is a fixed point $\bm{a}^{*}$ such that action $a_{i}^*$ is at most $R$ for every agent $i \in \mathcal{N}$.

    We now show that individual optimization of actions and links indeed gives a Nash equilibrium. Consider the fixed point $\bm{a}^*$ established in the paragraph above and corresponding link intensities $\bm{g}^*$. It suffices to show that the utility $\mathcal{U}_{i} : \mathbb{R}_{\geq 0}^{n} \to \mathbb{R}$ is strictly concave. Here, the utility is defined holding action and links of other agents $\left(\bm{a}_{-i}^*,\bm{g}_{-i}^*\right)$ fixed. The Hessian of $\mathcal{U}_{i}$ is negative definite if and only if \begin{align*}
        \rho^2 \sum_{j \in \mathcal{N} \setminus \{i\}}\frac{\left(a_{j}^*\right)^2}{f_{ij}} < 1.
    \end{align*} We have established in the paragraph above that $a_{j}^* \leq 2\overline{b}$ for every $j \in \mathcal{N}$. Using $f_{ij} \geq \underline{f}$, if $\rho < \sqrt{\underline{f}} / \left(2\overline{b } \sqrt{n-1}\right)$, the utility is concave. Combining with the fixed point construction in the paragraph above, an equilibrium exists for every $$\rho < \left\{\frac{\sqrt{\underline{f}}}{2\overline{b}\sqrt{n-1}},\frac{\sqrt{n^2 \overline{s}^2 + 4 \underline{f} n \overline{b}^2} - n\overline{s}}{8n\overline{b}^2}\right\}.$$ The statement of the result follows. 
\end{proof}

\begin{proof}[Proof of \Cref{l:welfarederivativeendogenous}]
    Consider any subsidy scheme $(\bm{\beta},\bm{\sigma})$ and a corresponding equilibrium $(\bm{a}^*,\bm{G}^*)$ that is differentiable in a neighborhood of $(\bm{\beta},\bm{\sigma})$. Applying \Cref{p:eqlbendo}, for every $j \in \mathcal{N}$, the equilibrium action satisfies \begin{align} \label{eq:actionfoc}
        a_{j}^* = b_{j} + \beta_{j} + \rho \sum_{k \in \mathcal{N} \setminus \{j\}}G_{jk}a_{k}^*.
    \end{align} Furthermore, for any individual $k \in \mathcal{N}$ whose link incentive constraint binds with agent $l \in \mathcal{N}$, the equilibrium link intensity satisfies \begin{align} \label{eq:linkfoc}
       f_{kl} g_{kl}^* = \sigma_{kl} + s_{kl} + \rho a_{k}^*a_{l}^* .
    \end{align} We will implicitly differentiate (\ref{eq:actionfoc}) and (\ref{eq:linkfoc}) with respect to $\beta_{i}$ and $\sigma_{ij}$ respectively to solve for the derivative of equilibrium actions. We begin with a perturbation in action subsidy $\beta_{i}$. Implicitly differentiating (\ref{eq:actionfoc}) with respect to $\beta_{i}$, we have \begin{align} \label{eq:actionderbetajneqi}
        \frac{d a_{j}^*}{d \beta_{i}} = \rho \left( \sum_{k \in \mathcal{N} \setminus \{j\}}G_{jk}^* \frac{d a_{k}^*}{d \beta_{i}} + \sum_{k \in \mathcal{N} \setminus \{j\}}\left(\frac{d g_{jk}^*}{d \beta_{i}} + \frac{d g_{kj}^*}{d \beta_{i}}\right) a_{k}^* \right).
    \end{align} Furthermore, for agent $i$ we have \begin{align} \label{eq:actionderbetajeqi}
         \frac{d a_{i}^*}{d \beta_{i}} = 1 + \rho \left(\sum_{k \in \mathcal{N} \setminus \{i\}}G_{ik}^* \frac{d a_{k}^*}{d \beta_{i}} + \sum_{k \in \mathcal{N} \setminus \{i\}}\left(\frac{d g_{ik}^*}{d \beta_{i}} + \frac{d g_{ki}^*}{d \beta_{i}}\right) a_{k}^*\right).
    \end{align} Now, implicitly differentiating (\ref{eq:linkfoc}) with respect to $\beta_{i}$ for every link intensity from agent $k$ to agent $l$ whose incentive constraint binds, we have \begin{align} \label{eq:linkderivbeta}
        f_{kl}\frac{d g_{kl}^*}{d \beta_{i}} = \rho \left(a_{k}^* \frac{d a_{l}^*}{d \beta_{i}} + \frac{d a_{k}^*}{d \beta_{i}} a_{l}^*\right).
    \end{align} For any link intensity constraint that does not bind, we have $d g_{kl}^*/d \beta_{i} = 0$. We now substitute the link intensity derivatives in (\ref{eq:actionderbetajneqi}), for every $j \in \mathcal{N} \setminus \{i\}$. In the calculation below, we assume all link intensity incentive constraints bind. When they do not, the index of the summation and coefficients are appropriately defined over links for which constraints bind. Thus, we have \begin{align} 
         \frac{d a_{j}^*}{d \beta_{i}} &= \rho \left( \sum_{k \in \mathcal{N} \setminus \{j\}}G_{jk}^* \frac{d a_{k}^*}{d \beta_{i}} + \rho \sum_{k \in \mathcal{N} \setminus \{j\}}\frac{2}{f_{jk}}\left(a_{j}^*\frac{d a_{k}^*}{d \beta_{i}} + a_{k}^*\frac{d a_{j}^*}{d \beta_{i}}\right) a_{k}^* \right), \\ \label{eq:jneqibetafin}
        &= \rho \left(\sum_{k \in \mathcal{N} \setminus \{j\}}\left(G_{jk}^* + \frac{2\rho}{f_{jk}}a_{j}^*a_{k}^*\right)\frac{d a_{k}^*}{d \beta_{i}} + 2\rho \frac{da_{j}^*}{d \beta_{i}}\sum_{k \in \mathcal{N} \setminus \{j\}}\frac{\left(a_{k}^*\right)^2}{f_{jk}}\right). 
    \end{align} Analogously, substituting (\ref{eq:linkderivbeta}) in (\ref{eq:actionderbetajeqi}), we have \begin{align} \label{eq:jeqibetafin}
        \frac{d a_{i}^*}{d \beta_{i}} = 1 + \rho \left(\sum_{k \in \mathcal{N} \setminus \{i\}}\left(G_{ik}^* + \frac{2\rho}{f_{ik}} a_{i}^*a_{k}^*\right)\frac{d a_{k}^*}{d \beta_{i}} + 2\rho \frac{da_{i}^*}{d \beta_{i}}\sum_{k \in \mathcal{N} \setminus \{i\}}\frac{\left(a_{k}^*\right)^2}{f_{ik}}\right).
    \end{align} We combine (\ref{eq:jeqibetafin}) and (\ref{eq:jneqibetafin}) to write the derivative of actions $\bm{a}^*$ with respect to $\beta_{i}$ in vector form. Thus, when agent $i$'s action incentive constraint binds, \begin{align*}
        \left(\bm{I}-\rho \bm{\mathcal{M}}\right)\frac{d \bm{a}^*}{d \beta_{i}} = \bm{e}_{i},
    \end{align*} where $\bm{I}$ is the identity matrix, $\bm{\mathcal{M}}$ is a matrix with entry $\mathcal{M}_{ij} = G_{ij}^* + 2\rho a_{i}^*a_{j}^* / f_{ij}$ and $\mathcal{M}_{ii} = 2\rho  \sum_{k \in \mathcal{N} \setminus \{i\}}  \left(a_{k}^*\right)^2 / f_{ik}$, and $\bm{e}_{i}$ is the $i^{\text{th}}$ basis vector. By assumption, equilibrium $(\bm{a}^*,\bm{g}^*)$ is differentiable at subsidy $(\bm{\beta},\bm{\sigma})$ and thus \begin{align*}
        \frac{d \bm{a}^*}{d \beta_{i}} = \left(\bm{I}-\rho \bm{\mathcal{M}}\right)^{-1}_{i},
    \end{align*} where $\left(\bm{I}-\rho \bm{\mathcal{M}}\right)^{-1}_{i}$ is the $i^{\text{th}}$ column of matrix $\left(\bm{I}-\rho \bm{\mathcal{M}}\right)^{-1}$. The welfare effect of perturbing subsidy $\beta_{i}$ is \begin{align*}
        \frac{d \mathcal{W}}{d \beta_{i}} = \nabla \mathcal{W}^{\top} \frac{d \bm{a}^*}{d \beta_{i}} = \nabla \mathcal{W}^{\top} \left(\bm{I}-\rho \bm{\mathcal{M}}\right)^{-1}_{i}. 
    \end{align*}
    Call this quantity $\mathcal{R}_i$. 

    
We now follow the same set of calculations to solve for the derivative of equilibrium actions $\bm{a}^*$ with respect to link subsidy $\sigma_{ij}$. For every agent $k \in \mathcal{N}$, implicitly differentiating (\ref{eq:actionfoc}) with respect to $\sigma_{ij}$, we have \begin{align} \label{eq:actiondersigmajneqi}
         \frac{d a_{k}^*}{d \sigma_{ij}} = \rho \left( \sum_{l \in \mathcal{N} \setminus \{k\}}G_{kl}^* \frac{d a_{l}^*}{d \sigma_{ij}} + \sum_{l \in \mathcal{N} \setminus \{k\}}\left(\frac{d g_{kl}^*}{d \sigma_{ij}} + \frac{d g_{lk}^*}{d \sigma_{ij}}\right) a_{l}^* \right),
    \end{align}

We implicitly differentiate (\ref{eq:linkfoc}) with respect to $\sigma_{ij}$ for agents with link intensity constraints that bind. For every link intensity $(k,l)$ that is not $(i,j)$, we have \begin{align} \label{eq:linkderiva}
       f_{kl} \frac{d g_{kl}^*}{d \sigma_{ij}} = \rho \left(a_{k}^* \frac{d a_{l}^*}{d \sigma_{ij}} + \frac{d a_{k}^*}{d \sigma_{ij}} a_{l}^*\right),
    \end{align} and for the link intensity from agent $i$ to $j$, we have \begin{align} \label{eq:linkderivb}
       f_{ij} \frac{d g_{ij}^*}{d \sigma_{ij}} = 1 + \rho \left(a_{i}^* \frac{d a_{j}^*}{d \sigma_{ij}} + \frac{d a_{j}^*}{d \sigma_{ij}} a_{i}^*\right).
    \end{align} For link incentive constraints that do not bind, we have $d g_{kl}^* / d \sigma_{ij} = 0$. We now substitute (\ref{eq:linkderiva}) and (\ref{eq:linkderivb}) in (\ref{eq:actiondersigmajneqi}). As we did when taking derivatives with respect to $\beta_{i}$, we assume all link intensities bind in the calculation below. When they do not, the index of summation and coefficients are appropriately defined over links for which the constraint binds. The index and coefficients are identical when taking derivatives with respect to $\beta_{i}$ and $\sigma_{ij}$. Thus, we have \begin{align*}
         \frac{da_{k}^*}{d \sigma_{ij}} = \rho \left(\sum_{l \in \mathcal{N} \setminus \{k\}}\left(G_{kl}^* + \frac{2\rho}{f_{kl}} a_{k}^*a_{l}^*\right)\frac{d a_{l}^*}{d \sigma_{ij}} + 2\rho \frac{d a_{k}^*}{d \sigma_{ij}}\sum_{l \in \mathcal{N} \setminus \{k\}}\frac{\left(a_{l}^*\right)^2}{f_{kl}}\right),  
    \end{align*} for every agent $k \neq i,j$. Furthermore, for agent $i$ and $j$, we have \begin{align*}
         \frac{da_{i}^*}{d \sigma_{ij}} = \frac{\rho}{f_{ij}} a_{j}^* + \rho \left(\sum_{l \in \mathcal{N} \setminus \{i\}}\left(G_{il}^* + \frac{2\rho}{f_{il}} a_{i}^*a_{l}^*\right)\frac{d a_{l}^*}{d \sigma_{ij}} + 2\rho \frac{d a_{i}^*}{d \sigma_{ij}}\sum_{l \in \mathcal{N} \setminus \{i\}}\frac{\left(a_{l}^*\right)^2}{f_{il}}\right), 
    \end{align*} and \begin{align*}
         \frac{da_{j}^*}{d \sigma_{ij}} = \frac{\rho}{f_{ij}} a_{i}^* + \rho \left(\sum_{l \in \mathcal{N} \setminus \{j\}}\left(G_{jl}^* + \frac{2\rho}{f_{jl}} a_{j}^*a_{l}^*\right)\frac{d a_{l}^*}{d \sigma_{ij}} + 2\rho \frac{d a_{j}^*}{d \sigma_{ij}}\sum_{l \in \mathcal{N} \setminus \{j\}}\frac{\left(a_{l}^*\right)^2}{f_{jl}}\right).
    \end{align*} Suppose the link intensity incentive constraint from agent $i$ to agent $j$ binds. We combine the set of equations above to write the derivative of actions $\bm{a}^*$ with respect to $\sigma_{ij}$ in vector form. Thus,  \begin{align*}
        \left(\bm{I}-\rho \bm{\mathcal{M}}\right)\frac{d \bm{a}^*}{d \sigma_{ij}} = \frac{\rho}{f_{ij}} \left(a_{j}^*\bm{e}_{i} + a_{i}^* \bm{e}_{j}\right). 
    \end{align*} By assumption, equilibrium $(\bm{a}^*,\bm{g}^*)$ is differentiable at subsidy $(\bm{\beta},\bm{\sigma})$ and thus \begin{align*}
        \frac{d \bm{a}^*}{d \sigma_{ij}} = \frac{\rho}{f_{ij}} \left(a_{j}^* \left(\bm{I}-\rho \bm{\mathcal{M}}\right)^{-1}_{i} + a_{i}^* \left(\bm{I}-\rho \bm{\mathcal{M}}\right)^{-1}_{j}\right).
    \end{align*} Using this expression, the welfare effect of perturbing subsidy $\sigma_{ij}$ is \begin{align*}
        \frac{d \mathcal{W}}{d \sigma_{ij}} = \frac{\rho}{f_{ij}}\left(\mathcal{R}_{i}a_{j}^* + \mathcal{R}_{j}a_{i}^*\right).
    \end{align*}
    The statement of the result follows. 
\end{proof}


\begin{proof}[Proofs  of \Cref{t:endogenousmodel} and \Cref{p:negativebaselineendo}]
   For a given welfare objective $\mathcal{W}(\cdot)$, the Lagrangian of the planner's optimization problem is given by \begin{align*}
        \mathcal{L}(\bm{a},\bm{g},\bm{\beta},\bm{\sigma},\lambda) = \mathcal{W}(\bm{a}) - \lambda \left(\sum_{k \in \mathcal{N}}\beta_{k}a_{k} + \sum_{k \in \mathcal{N}}\sum_{l \in \mathcal{N} \setminus \{k\}}\sigma_{kl}G_{kl}\right).
    \end{align*} We begin by taking the derivative of the Lagrangian with respect to $\beta_{i}$: \begin{align} \label{eq:lagderivbeta}
        \frac{d \mathcal{L}}{d \beta_{i}} = \nabla \mathcal{W}^{\top} \frac{d \bm{a}^*}{d \beta_{i}} -\lambda \left(\sum_{k \in \mathcal{N}} \beta_{k} \frac{d a_{k}^*}{d \beta_{i}} + \sum_{k \in \mathcal{N}}\sum_{l \in \mathcal{N} \setminus \{k\}}\sigma_{kl}\frac{d G_{kl}^*}{d \beta_{i}}\right) - \lambda a_{i}^*.
    \end{align} Substituting (\ref{eq:linkderivbeta}) in the expression above, there is a vector $\bm{\omega}$ such that the derivative of the Lagrangian with respect to $\beta_{i}$ satisfies \begin{align} \label{eq:reducedvec}
        \frac{d \mathcal{L}}{d \beta_{i}} = \bm{\omega}^{\top} \frac{d \bm{a}^*}{d \beta_{i}} - \lambda a_{i}^*.
    \end{align}  Substituting the expression \begin{align*}
        \frac{d \bm{a}^*}{d \beta_{i}} = \left(\bm{I}-\rho \bm{\mathcal{M}}\right)^{-1}_{i},
    \end{align*} derived in the proof of \Cref{l:welfarederivativeendogenous}, into (\ref{eq:reducedvec}) we have \begin{align}
        \frac{d \mathcal{L}}{d \beta_{i}} &= \bm{\omega}^{\top} \left(\bm{I}-\rho \bm{\mathcal{M}}\right)^{-1}_{i}  - \lambda a_{i}^*, \\ \label{eq:betafocfin}
        &= \mathcal{R}_{i}' - \lambda a_{i}^*,
    \end{align} where we define $\mathcal{R}_{i}' := \bm{\omega}^{\top} \left(\bm{I}-\rho \bm{\mathcal{M}}\right)^{-1}_{i}$.

We next take the derivative of the Lagrangian with respect to $\sigma_{ij}$: \begin{align} \label{eq:lagderivsigma}
        \frac{d \mathcal{L}}{d \sigma_{ij}} = \nabla \mathcal{W}^{\top} \frac{d \bm{a}^*}{d \sigma_{ij}} -\lambda \left(\sum_{k \in \mathcal{N}} \beta_{k} \frac{d a_{k}^*}{d \sigma_{ij}} + \sum_{k \in \mathcal{N}}\sum_{l \in \mathcal{N} \setminus \{k\}}\sigma_{kl}\frac{d G_{kl}^*}{d \sigma_{ij}}\right) - \lambda G_{ij}^*.
    \end{align} When $s_{ij} \geq 0$, the link intensity incentive constraint between agents binds. Furthermore, when $s_{ij} < 0$, we assume that $g_{ij}^*>0$ and $g_{ji}^*>0$. Thus, for the analysis below, suppose the link intensity constraint from agent $i$ to $j$ and from agent $j$ to $i$ binds. 
    
    We substitute (\ref{eq:linkderiva}) and (\ref{eq:linkderivb}) into (\ref{eq:lagderivsigma}) and use $$G_{ij}^* = \frac{\left(2s_{ij} + \sigma_{ij} + \sigma_{ji} + 2\rho a_{i}^*a_{j}^*\right)}{f_{ij}}.$$ Consider the following chain of equations: \begin{align}
        \frac{d \mathcal{L}}{d \sigma_{ij}} &= \bm{\mathcal{\omega}}^{\top} \frac{d \bm{a}^*}{d \sigma_{ij}} - \lambda \frac{\left(\sigma_{ij} + \sigma_{ji}\right)}{f_{ij}} - \lambda G_{ij}^*, \\
        &= \bm{\mathcal{\omega}}^{\top} \frac{d \bm{a}^*}{d \sigma_{ij}} - 2\lambda \frac{\left(\sigma_{ij} + \sigma_{ji}\right)}{f_{ij}} - 2\lambda \frac{\left(s_{ij} + \rho a_{i}^*a_{j}^*\right)}{f_{ij}}, \\ \label{eq:linkfocfin}
        &= \bm{\mathcal{\omega}}^{\top} \frac{d \bm{a}^*}{d \sigma_{ij}} - \frac{2\lambda \rho}{f_{ij}} a_{i}^*a_{j}^* - \frac{2\lambda}{f_{ij}} s_{ij} - 2\lambda \frac{\left(\sigma_{ij}+\sigma_{ji}\right)}{f_{ij}}.
    \end{align}

Consider any optimal subsidy scheme $(\bm{\beta}^*,\bm{\sigma}^*)$ and corresponding equilibrium $(\bm{a}^*,\bm{g}^*)$ such that the budget constraint strictly binds, that is, $\lambda > 0$. Then,
\begin{align}
        \frac{d \mathcal{L}}{d \sigma_{ij}} 
        &= \frac{\rho}{f_{ij}} \left(\mathcal{R}_{i}'a_{j}^* + \mathcal{R}_{j}'a_{i}^* - 2\lambda a_{i}^*a_{j}^*\right) - \frac{2\lambda}{f_{ij}} s_{ij} - 2\lambda \frac{\left(\sigma_{ij}+\sigma_{ji}\right)}{f_{ij}}. \label{eq:linkfocfin2}
    \end{align} Furthermore, analyzing (\ref{eq:betafocfin}), the KKT first-order conditions imply \begin{align*}
    \mathcal{R}_{i}' \leq \lambda a_{i}^* \quad \text{ and } \quad \mathcal{R}_{j}' \leq \lambda a_{j}^*.
\end{align*} Substituting in (\ref{eq:linkfocfin2}), we have \begin{align*}
    \frac{d \mathcal{L}}{d \sigma_{ij}} \leq -\frac{2\lambda}{f_{ij}} s_{ij} - 2 \lambda \frac{\left(\sigma_{ij}+\sigma_{ji}\right)}{f_{ij}}. 
\end{align*} An analogous calculation shows that the derivative of the Lagrangian with respect to $\sigma_{ji}$ satisfies \begin{align*}
    \frac{d \mathcal{L}}{d \sigma_{ji}} \leq -\frac{2\lambda}{f_{ij}} s_{ij} - 2 \lambda \frac{\left(\sigma_{ij}+\sigma_{ji}\right)}{f_{ij}}. 
\end{align*} Thus, if $s_{ij} \geq 0$, it must be that $\sigma_{ij}^* = \sigma_{ji}^*=0$. Otherwise, the total subsidy across the pair of agents $\sigma_{ij}^*+\sigma_{ji}^*>0$, implying that the Lagrangian has a negative derivative with respect to the positive link subsidy, a contradiction. 

In contrast, suppose $s_{ij} < 0$ and $\beta_{i}^*>0$ and $\beta_{j}^*>0$. By the KKT first-order conditions, $\mathcal{R}_{i}'=\lambda a_{i}^*$ and $\mathcal{R}_{j}'=\lambda a_{j}^*$. Substituting in (\ref{eq:linkfocfin}), we have \begin{align*}
    \frac{d \mathcal{L}}{d \sigma_{ij}} = -\frac{2 \lambda}{f_{ij}}s_{ij} - 2 \lambda \frac{\left(\sigma_{ij}+\sigma_{ji}\right)}{f_{ij}}. 
\end{align*} If $\sigma_{ij}^*+\sigma_{ji}^*=0$, then $d \mathcal{L} / d \sigma_{ij} > 0$ which contradicts the KKT first-order conditions. Thus, it must be that the total link subsidy $\sigma_{ij}^*+\sigma_{ji}^*>0$. The statement of the result follows. 
\end{proof}

\begin{proof}[Proof of \Cref{t:benchmarkmodelquadratic}]

We introduce some notation that will be used throughout the proof. Fix a subsidy $\left(\bm{\beta},\bm{\sigma}\right)$ and equilibrium $\left(\bm{a}^*,\bm{g}^*\right)$. For any link intensity $g_{kl}$ such that the equilibrium first-order condition binds, we have \begin{align*}
    f_{kl} g_{kl}^* = s_{kl} + \sigma_{kl}.
\end{align*} Furthermore, by the first-order conditions for $a_i$, equilibrium actions $\bm{a}^*$ must solve \begin{align}
    \label{eq:eqlbexoi} \left(\bm{I}-\rho \bm{G}^*\right) \bm{a}^* = \bm{b} + \bm{\beta}
\end{align} where $\bm{I}$ is the identity matrix. We will establish that the spectral radius of $\bm{G}^*$, which we denote by $\mu_{1}(\bm{G}^*)$, is at most $1/\rho$. The network $\bm{G}^*$ is non-negative. Thus, by the Perron-Frobenius theorem, the spectral radius $\mu_{1}(\bm{G}^*)$ is an eigenvalue of $\bm{G}^*$ with a corresponding eigenvector $\bm{v}_{1}$, which we normalize so $\|v_1\|_2=1$, having all entries non-negative. Multiplying (\ref{eq:eqlbexoi}) by $\bm{v}_1^{\top}$, we have \begin{align*}
    (1-\rho \mu_{1}(\bm{G}^*))\bm{v}_{1}^{\top} \bm{a}^* = \bm{v}^{\top}_{1} \left(\bm{b} + \bm{\beta}\right). 
\end{align*} In the equation above, because $b_{i} + \beta_{i} \geq \underline{b}$ for every $i \in \mathcal{N}$ and $\bm{v}_{1}$ has unit $l_{2}$-norm, the right-hand side is strictly positive. Thus, it must be $\mu_{1}(\bm{G}^*) < 1/\rho$. Consequently, equilibrium actions are \begin{align} \label{eq:eqlbexo}
    \bm{a}^* =  \left(\bm{I}-\rho \bm{G}^*\right)^{-1} \left(\bm{b} + \bm{\beta}\right).
\end{align} We finish by making some useful observations due to this characterization. First, any subsidy $\left(\bm{\beta},\bm{\sigma}\right)$ admits an equilibrium if and only if the spectral radius $\mu_{1}(\bm{G}^*) < 1/\rho$ and this equilibrium is unique. Second, the equilibrium is differentiable in the action subsidy $\beta_{i}$. Finally, when the link intensity $g_{ij}^*>0$, the equilibrium is differentiable in the link subsidy $\sigma_{ij}$.

We now analyze the derivative of equilibrium actions $\bm{a}^*$ with respect to $\beta_{i}$. A straightforward calculation establishes that \begin{align} \label{eq:derivexobeta}
    \frac{d \bm{a}^*}{d \beta_{i}} = \left(\bm{I}-\rho \bm{G}^*\right)^{-1}_{i},
\end{align} where $\left(\bm{I}-\rho \bm{G}^*\right)^{-1}_{i}$ is the $i^{\text{th}}$ column of the matrix $\left(\bm{I}-\rho \bm{G}^*\right)^{-1}$. A similar calculation establishes that the derivative of equilibrium actions with respect to $\sigma_{ij}$, when $g_{ij}^*>0$, satisfies \begin{align} \label{eq:derivexosigma}
    \frac{d \bm{a}^*}{d \sigma_{ij}} = \frac{\rho}{f_{ij}} \left(\left(\bm{I}-\rho \bm{G}^*\right)^{-1}_{i}a_{j}^* + \left(\bm{I}-\rho \bm{G}^*\right)^{-1}_{j}a_{i}^*\right).
\end{align} For the next paragraph, we introduce notation to denote the derivative of welfare and planner payments with respect to $\beta_{i}$ and $\sigma_{ij}$ assuming $g_{ij}^*>0$. For a derivative with respect to $\beta_{i}$, we denote \begin{align} \label{eq:bderivexo}
   \frac{d \mathcal{W}(\bm{a}^*)}{d \beta_{i}} = \underbrace{ \nabla \mathcal{W}^{\top} \left(\bm{I}-\rho \bm{G}^*\right)^{-1}_{i}}_{\mathcal{S}_{i}} \quad \text{ and } \quad \frac{d \mathcal{P}(\bm{a}^*)}{d \beta_{i}} = \underbrace{\bm{\beta}^{\top} \left(\bm{I}-\rho \bm{G}^*\right)^{-1}_{i}}_{\mathcal{T}_{i}} + a_{i}^*.
\end{align} Using (\ref{eq:derivexosigma}), for a derivative with respect to $\sigma_{ij}$, we have \begin{align} \label{eq:sderivexo}
    \frac{d \mathcal{W}(\bm{a}^*)}{d \sigma_{ij}} = \frac{\rho}{f_{ij}} \left(\mathcal{S}_{i}a_{j}^* + \mathcal{S}_{j}a_{i}^*\right) \quad \text{ and } \quad  \frac{d \mathcal{P}(\bm{a}^*)}{d \sigma_{ij}} = \frac{\rho}{f_{ij}} \left(\mathcal{T}_{i}a_{j}^* + \mathcal{T}_{j}a_{i}^*\right) + G_{ij}^* + \frac{\sigma_{ij} + \sigma_{ji}}{f_{ij}}. 
\end{align} 

A crucial result which we provide below establishes that the welfare under any intervention that subsidizes some agent's action can be strictly improved by a different intervention that is budget feasible. 

\begin{lemma} \label{l:welfareimproving}
  Consider any sequence of environments and corresponding sequence of zero-subsidy networks $\{\bm{G}^{(0)}_{m}\}_{m \in \mathbb{N}}$ satisfying \Cref{as:highspillovers}. For $m$ large enough, consider any subsidy $\left(\bm{\beta},\bm{\sigma}\right)$ that is budget-feasible and has $\beta_{i} > 0$ for some $i \in \mathcal{N}$. There exists another subsidy $\left(\widetilde{\bm{\beta}},\widetilde{\bm{\sigma}}\right)$ that is budget feasible and obtains strictly greater welfare than $\left(\bm{\beta},\bm{\sigma}\right)$. 
\end{lemma}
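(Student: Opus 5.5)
The plan is a first-order reallocation argument: take a budget-feasible $(\bm{\beta},\bm{\sigma})$ with $\beta_i>0$, lower $\beta_i$ by a small $\varepsilon>0$, and spend the saving on a link subsidy $\sigma_{kl}$ for a suitably chosen pair with $g_{kl}^*>0$. I will show that for $m$ large the welfare gained per dollar on $\sigma_{kl}$ strictly exceeds the welfare lost per dollar saved on $\beta_i$.

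\emph{Setup and the inequality to prove.} By the characterization above, equilibrium is unique and differentiable in $\beta_i$, and in $\sigma_{kl}$ whenever $g_{kl}^*>0$. A small perturbation keeps $\mu_1(\bm{G}^*)<1/\rho$, so the perturbed subsidy still has an equilibrium. Using (\ref{eq:bderivexo}) and (\ref{eq:sderivexo}), it suffices to show
\begin{align*}
\frac{\mathcal{S}_i}{\mathcal{T}_i + a_i^*} < \frac{\mathcal{S}_k a_l^* + \mathcal{S}_l a_k^*}{\mathcal{T}_k a_l^* + \mathcal{T}_l a_k^* + \frac{f_{kl}}{\rho}\left(G_{kl}^* + \frac{\sigma_{kl}+\sigma_{lk}}{f_{kl}}\right)}.
\end{align*}
If $\mathcal{T}_i + a_i^*>0$ is the saving per unit of $\beta_i$, then cutting $\beta_i$ by $\varepsilon$ and raising $\sigma_{kl}$ by the matching $\eta$ keeps payments unchanged to first order and raises welfare to first order. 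A slightly smaller $\eta$ then makes the new subsidy strictly feasible with strictly higher welfare. For the link, I pick a pair with $s_{kl}>0$ (or any pair with $g^{(0)}_{kl}>0$); then $g_{kl}^*\ge g_{kl}^{(0)}>0$, because link intensities are monotone in $\sigma\geq 0$.

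\emph{Key estimates.} Write $\bm{M}=(\bm{I}-\rho\bm{G}^*)^{-1}$. It is symmetric and nonnegative, and since $\bm{G}^*\ge \bm{G}^{(0)}_m$ entrywise, its spectral radius satisfies $\rho\mu_1(\bm{G}^*)\in[\rho\mu_1(\bm{G}^{(0)}_m),1)$. The first ingredient is \emph{uniform comparability}: there is a constant $\kappa>0$, independent of $m$ and of the subsidy, such that all entries of $\bm{M}$ and all actions $a_k^*$ lie within a factor $\kappa$ of each other, and all of them diverge as $m\to\infty$. For this I apply Uniform Primitivity to $\bm{G}^*$: since $\bm{G}^*\ge\bm{G}^{(0)}$ and $\mu_1(\bm{G}^*)/\mu_1(\bm{G}^{(0)})\to 1$ (both are trapped near $1/\rho$), we get $(\bm{G}^*)^L\ge \delta'\mu_1(\bm{G}^*)^L\bm{1}\bm{1}^\top$. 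Expanding $\bm{M}=\sum_t\rho^t(\bm{G}^*)^t$ in blocks of length $L$ then shows the Neumann series is dominated, up to bounded factors, by a rank-one Perron term whose entries are uniformly comparable and of order $1/(1-\rho\mu_1(\bm{G}^*))\to\infty$. Because $\bm{b}\ge\underline{b}\bm{1}$, every $a_k^*\ge \underline{b}\sum_j M_{kj}$ also diverges. Comparability of $\mathcal{S}_k=\nabla\mathcal{W}^\top\bm{M}_k$ across $k$ follows, since $\nabla\mathcal{W}\ge 0$ is common to all columns.

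The second ingredient is $\mathcal{T}_k=o(a_k^*)$. Budget feasibility gives $\beta_j a_j^*\le\mathcal{B}$, so $\max_j\beta_j\le \mathcal{B}/\min_j a_j^*\to 0$. By symmetry of $\bm{M}$, $\mathcal{T}_k=\sum_j\beta_jM_{jk}\le (\max\beta/\underline{b})\,a_k^*$. Moreover $G_{kl}^*$ and $\sigma_{kl}+\sigma_{lk}$ stay bounded: $\sigma$ is bounded because $\sigma_{kl}G_{kl}^*\le\mathcal{B}$ and $G^*_{kl}\ge G^{(0)}_{kl}>0$, though I would check this lower bound is uniform along the sequence. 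Plugging in, the left side is at most $\mathcal{S}_i/a_i^*$. The right side is at least of order $\kappa^{-2}\mathcal{S}_i a^*/(o((a^*)^2)+C)$, which is far larger than $\mathcal{S}_i/a_i^*$ once $a^*$ is large.

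\emph{Main obstacle.} The hardest step is the uniform comparability of the entries of $(\bm{I}-\rho\bm{G}^*)^{-1}$ for the \emph{subsidized} networks $\bm{G}^*$. Assumption \ref{as:highspillovers} is stated only for the zero-subsidy networks, and the constants must not depend on the subsidy. I would first try the monotonicity route above ($\bm{G}^*\ge\bm{G}^{(0)}$ together with squeezing the spectral radii) and control the non-Perron part of the Neumann series through the $L$-step positivity. If that proves delicate, the fallback is a Birkhoff/Hilbert-metric contraction argument for positive matrices, which gives entry ratios bounded in terms of $\delta$ and $L$ alone.
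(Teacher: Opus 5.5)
Your proof is correct and follows the paper's skeleton: a first-order, budget-neutral shift from $\beta_i$ to a link subsidy on a pair with positive baseline incentive, evaluated in the limit $\rho\mu_1(\bm G^{(0)}_m)\to1$. The key estimates differ, though.

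\textbf{What the paper does.} It always shifts to a link incident to $i$, i.e.\ a $j$ with $s_{ij}>0$, and studies $\Delta=\frac{d\mathcal W}{d\sigma_{ij}}\frac{d\mathcal P}{d\beta_i}-\frac{d\mathcal W}{d\beta_i}\frac{d\mathcal P}{d\sigma_{ij}}$. Because $k=i$, the $\mathcal T_i$ terms either cancel or enter with a favorable sign. The exact inequality $a_j^*>\mathcal T_j$ (from symmetry of $\bm M$ and $\bm b>0$) disposes of $\mathcal T_j$. This leaves
$\Delta\ge\rho(a_i^*)^2\mathcal S_j/f_{ij}-\mathcal S_i\bigl(G_{ij}^*+(\sigma_{ij}+\sigma_{ji})/f_{ij}\bigr)$.
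From there only one-sided bounds are needed:
\begin{itemize}
\item $\mathcal S_i\le\|\nabla\mathcal W\|_1/(1-\rho\mu_1(\bm G^*))$;
\item a lower bound on $\mathcal S_j$ that is a constant times $\|\nabla\mathcal W\|_1$;
\item $a_i^*\gtrsim1/(1-\rho\mu_1(\bm G^*))$.
\end{itemize}
The squared action then dominates. So the paper never needs two-sided comparability of the entries of $\bm M$, nor that $\max_j\beta_j\to0$.

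\textbf{What your route requires.} You need both of those ingredients because you allow an arbitrary pair $(k,l)$ and get no cancellation. The comparability step you flag as the main obstacle is easier than you fear. With $\bm A=\rho\bm G^*$:
\begin{itemize}
\item $\bm M\ge\bm A^L\bm M\bm A^L\ge\delta^2(\rho\mu_1(\bm G^{(0)}))^{2L}(\bm 1^\top\bm M\bm 1)\,\bm 1\bm 1^\top$;
\item $\bm 1^\top\bm M\bm 1\ge(\bm 1^\top\bm v_1)^2/(1-\rho\mu_1(\bm G^*))\ge1/(1-\rho\mu_1(\bm G^*))$;
\item $M_{kl}\le1/(1-\rho\mu_1(\bm G^*))$.
\end{itemize}
No Hilbert-metric argument is needed.

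The uniform lower bound on $G^{(0)}_{kl}$ you wanted to check is also unnecessary. Both link conditions bind, so $\sigma_{kl}+\sigma_{lk}\le f_{kl}G_{kl}^*\le f_{kl}\mu_1(\bm G^*)<f_{kl}/\rho$. Alternatively, use the paper's bound $\sigma_{ij}+\sigma_{ji}\le\sqrt{\mathcal B f_{ij}}$.

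\textbf{What each route buys.} The paper's route is lighter. Yours is more robust: it works for any link with $s_{kl}>0$. Moreover, your comparability estimates show the reallocation raises \emph{every} action to first order. That settles the degenerate case $\nabla\mathcal W(\bm a^*)=\bm 0$ directly from strict monotonicity of $\mathcal W$. You do not address that case, and there your target inequality reads $0<0$; the paper handles it with a one-line continuity remark.
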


\begin{proof}
  Consider any agent $i$ with action subsidy $\beta_{i} > 0$. The main part of the proof establishes that when $m$ is large enough, there is some agent $j$ such that the quantity \begin{align} \label{eq:perturbexo}
       \underbrace{\frac{d \mathcal{W}}{d \sigma_{ij}} \cdot \frac{d \mathcal{P}}{d \beta_{i}} - \frac{d \mathcal{W}}{d \beta_{i}} \cdot \frac{d \mathcal{P}}{d \sigma_{ij}}}_{\Delta},
    \end{align} is the product of $\|\nabla \mathcal{W}\|_1$ and a  positive term. We establish this fact and then apply it to prove the lemma.

    By \Cref{as:highspillovers} Part (a), there must exist an agent $j$ such that $s_{ij} > 0$. We will fix this agent and analyze the link subsidy to him in the rest of the argument. Note that $g_{ij}^*>0$ and thus the welfare and planner payments are differentiable in $\sigma_{ij}$. Substituting (\ref{eq:bderivexo}) and (\ref{eq:sderivexo}) in (\ref{eq:perturbexo}), we have \begin{align} \label{eq:deltal}
        \Delta &= \frac{\rho}{f_{ij}} \left(\mathcal{S}_{i}a_{j}^* + \mathcal{S}_{j}a_{i}^*\right)\left(\mathcal{T}_{i} + a_{i}^*\right) - \mathcal{S}_{i} \left(\frac{\rho}{f_{ij}} \left(\mathcal{T}_{i}a_{j}^* + \mathcal{T}_{j}a_{i}^*\right) + G_{ij}^* + \frac{\sigma_{ij} + \sigma_{ji}}{f_{ij}}\right), \\
        & = \frac{\rho a_{i}^*}{f_{ij}} \left(\mathcal{S}_{j}\mathcal{T}_{i} + \mathcal{S}_{i}a_{j}^* + \mathcal{S}_{j}a_{i}^* - \mathcal{S}_{i}\mathcal{T}_{j}\right) - \mathcal{S}_{i} \left(G_{ij}^* + \frac{\sigma_{ij} + \sigma_{ji}}{f_{ij}}\right), \\
        & \geq \frac{\rho \left(a_{i}^*\right)^2 \mathcal{S}_{j}}{f_{ij}} - \mathcal{S}_{i} \left(G_{ij}^* + \frac{\sigma_{ij} + \sigma_{ji}}{f_{ij}}\right). \label{eq:deltalb}
    \end{align} Here, the final inequality is obtained by comparing (\ref{eq:eqlbexo}) and the definition of $\mathcal{T}_{i}$ in (\ref{eq:bderivexo}). To compare these terms, observe that since the spectral radius of $\rho \bm{G}^*$ is at most $1$ and $b_{i} \geq \underline{b}$, we have $a_{j}^* > \mathcal{T}_{j}$ (and similarly $a_{i}^* > \mathcal{T}_{i}$). Since $\mathcal{S}_{j} \geq 0$ and $\mathcal{T}_{i} \geq 0$, this implies the final inequality.
    
    We now obtain a lower bound on $\Delta$ by bounding the quantities on the right hand side. We will denote $\left(\bm{I}-\rho \bm{G}^*\right)^{-1}$ by $\bm{M}$. Because $\rho \mu_{1}(\bm{G}^*) < 1$, the matrix $\bm{M}$ is positive semidefinite with non-negative entries. A standard fact therefore implies $M_{kl} \leq \sqrt{M_{kk}M_{ll}}$ for any $k,l \in \mathcal{N}$. Furthermore, $M_{kk} \leq \|\bm{M}\|_{2}$ for any $k$ and thus $M_{kl} \leq 1/\left(1-\rho \mu_{1}(\bm{G}^*)\right)$ for any $k,l \in \mathcal{N}$. Substituting in the definition of $\mathcal{S}_{i}$ (see (\ref{eq:bderivexo})), we have \begin{align} \label{eq:ubs}
        \mathcal{S}_{i} = \sum_{l \in \mathcal{N}}M_{il}\nabla \mathcal{W}_{l} \leq \|\nabla \mathcal{W}\|_{1} \cdot \frac{1}{1-\rho \mu_{1}\left(\bm{G}^*\right)}.
    \end{align} The last inequality follows from the fact that links under any non-negative subsidy are weakly greater than links under the zero subsidy and thus $\mu_{1}\left(\bm{G}^{(0)}\right) \leq \mu_{1} \left(\bm{G}^*\right)$. We now establish a lower bound on $\mathcal{S}_{j}$. By \Cref{as:highspillovers} Part (a), consider the following chain of inequalities where the inequality is pointwise on the elements of the matrix: \begin{align*}
        \bm{M} \geq \left(\rho \bm{G}^*\right)^{L} \geq \left(\rho \bm{G}^{(0)}\right)^{L} \geq \delta \left(\rho \mu_{1}\left(\bm{G}^{(0)}\right)\right)^{L} \bm{1} \bm{1}^{\top}.
    \end{align*} Thus, \begin{align*}
        \mathcal{S}_{j} = \sum_{l \in \mathcal{N}}M_{jl}\nabla \mathcal{W}_{l} \geq \|\nabla \mathcal{W}\|_{1} \cdot  \delta \left(\rho \mu_{1}\left(\bm{G}^{(0)}\right)\right)^{L}. 
    \end{align*} Next, we establish a lower bound on equilibrium actions. Since $\rho \mu_{1}\left(\bm{G}^*\right) < 1$ and by  \Cref{as:highspillovers} Part (a), we have $\bm{a}^* \geq \left(\rho\bm{G}^*\right)^{L} \bm{a}^* \geq \delta \left(\rho \mu_{1}\left(\bm{G}^{(0)}\right)\right)^{L} \bm{1} \bm{1}^{\top} \bm{a}^*$. Now, equilibrium actions satisfy $ \bm{a}^* = \bm{M}\left(\bm{b} + \bm{\beta}\right)$. Premultiplying each side by the top-most eigenvector of $\bm{M}$, we have $\bm{v}_{1}^{\top}\bm{a}^* = \bm{v}_{1}^{\top} \left(\bm{b} + \bm{\beta}\right)/\left(1-\rho \mu_{1}\left(\bm{G}^*\right)\right)$. Since $\|\bm{v}_{1}\|_{2} = 1$, we have $\bm{1}^{\top}\bm{a}^* \geq \bm{v}_{1}^{\top}\left(\bm{b} + \bm{\beta}\right)/\left(1-\rho \mu_{1}\left(\bm{G}^{*}\right)\right)$. Using $b_{i} \geq \underline{b}$ for every $i$, we have \begin{align*}
        \bm{a}^* \geq \delta \left(\rho \mu_{1}\left(\bm{G}^{(0)}\right)\right)^{L} \cdot \frac{\underline{b}}{1-\rho \mu_{1}\left(\bm{G}^{*}\right)} \bm{1}.
    \end{align*}

    Finally, we establish an upper bound on $G_{ij}^*$ and the subsidy $\sigma_{ij}$ and $\sigma_{ji}$. Note that because $s_{ij} > 0$ and $s_{ij}=s_{ji}$, we have $G_{ij}^* \geq \left(\sigma_{ij} + \sigma_{ji}\right)/f_{ij}$. Since the intervention is budget feasible, we must have that $\left(\sigma_{ij}+\sigma_{ji}\right)G_{ij}^* \leq \mathcal{B}$. Thus, $\sigma_{ij} + \sigma_{ji} \leq \sqrt{\mathcal{B} f_{ij}}$. Now, we derive an upper bound on $G_{ij}^*$. Because the row sum of the matrix is upper bounded by the spectral radius, we have $ \rho s_{ij} \leq f_{ij}$ by \Cref{as:highspillovers} Part (b). Furthermore, using $f_{ij} \leq \overline{f}$, the baseline link incentive $s_{ij} \leq \overline{f} / \rho$. Using the characterization of equilibrium links, we have $G_{ij}^* \leq \left(2\frac{\overline{f}}{\rho} + \sigma_{ij} + \sigma_{ji}\right)/f_{ij} \leq \left(2\frac{\overline{f}}{\rho} + \sqrt{\mathcal{B} \overline{f}}\right)/\underline{f}$.

    Substituting in the lower bound on $\Delta$ established in (\ref{eq:deltalb}), we have \begin{multline}
        \Delta \geq \|\nabla \mathcal{W}\|_{1} \cdot \\ \left(\frac{\rho \delta^{3} \underline{b}^{2}}{\overline{f}} \left(\rho \mu_{1}\left(\bm{G}^{(0)}\right)\right)^{3L} \frac{1}{\left(1-\rho \mu_{1}\left(\bm{G}^{*}\right)\right)^2} - \frac{1}{1-\rho \mu_{1}\left(\bm{G}^{*}\right)} \cdot \left(\frac{2\frac{\overline{f}}{\rho} + \sqrt{\mathcal{B} \overline{f} }}{\underline{f}} + \sqrt{\frac{\mathcal{B} \overline{f}}{\underline{f}}}\right) \right).
    \end{multline} Note that $\rho \mu_{1}(\bm{G}^{(0)}) \leq \rho \mu_{1}\left(\bm{G}^*\right) < 1$. By \Cref{as:highspillovers}   (b) and the facts that the budget is bounded above and $\rho$ is bounded below, the coefficient of $ \|\nabla \mathcal{W}\|_{1}$ is positive for $m$ sufficiently large.

    If $ \|\nabla \mathcal{W}\|_{1} > 0$, then we conclude that $\Delta > 0$. Then there is a budget-neutral deviation for the planner to increase $\sigma_{ij}$ and decrease $\beta_i$ that increases welfare. If $ \|\nabla \mathcal{W}\|_{1} = 0$, since $\mathcal{W}(\cdot)$ is increasing in each argument, by continuity we can also find such a deviation.
\end{proof}

Now, consider any optimal subsidy $\left(\bm{\beta}^*,\bm{\sigma}^*\right)$ for $m$ large enough and suppose $\beta_{i}^*>0$. By \Cref{l:welfareimproving}, there exists another budget-feasible subsidy which gives the planner strictly greater welfare. This contradicts the optimality of $\left(\bm{\beta}^*,\bm{\sigma}^*\right)$. Thus, we can conclude that $\beta_{i}^*=0$ for every agent $i \in \mathcal{N}$.
\end{proof}

\begin{proof}[Proof of \Cref{t:baselineincentivegeneral}]
    The proof follows an identical approach to the proof of \Cref{t:endogenousmodel}. We assume the link intensity constraint between $i$ and $j$ binds. This is without loss when $s_{ij} \geq 0$ and is assumed when $s_{ij} < 0$. 
    
    For every agent $j \in \mathcal{N}$, we have \begin{align} \label{eq:actionfocgeneral}
        C_{j}'(a_{j}^*) = b_{j} + \beta_{j} + \rho \sum_{k \in \mathcal{N} \setminus \{j\}} G_{jk}^* u_{jk}'(a_{j}^*)u_{kj}(a_{k}^*)
    \end{align} Furthermore, \begin{align}
        \label{eq:linkfocgeneral}
        F_{kl}'(g_{kl}^*) = \sigma_{kl} + s_{kl} + \rho u_{kl}(a_{k}^*) u_{lk}(a_{l}^*)
    \end{align} for every pair of individuals $k \in \mathcal{N}$ and $l \in \mathcal{N}$ for whom the link incentive constraint binds.  We will implicitly differentiate (\ref{eq:actionfocgeneral}) and (\ref{eq:linkfocgeneral}) with respect to $\beta_{i}$ and $\sigma_{ij}$  to solve for the derivative of equilibrium actions. For the rest of the proof, we assume the link intensity constraint for every pair of agents binds. An argument identical to that in the proof of \Cref{t:endogenousmodel} shows that this is without loss. We begin by differentiating with respect to the action subsidy $\beta_{i}$. For every agent $j \in \mathcal{N} \setminus \{i\}$, implicitly differentiating (\ref{eq:actionfocgeneral}) with respect to $\beta_{i}$, we have \begin{multline}
        \label{eq:actionderbetajneqigeneral}
        C_{j}''(a_{j}^*) \frac{d a_{j}^*}{d \beta_{i}} = \rho \sum_{k \in \mathcal{N} \setminus \{j\}}G_{jk}^*\left(u_{jk}''(a_{j}^*)u_{kj}(a_{k}^*)\frac{d a_{j}^*}{d \beta_{i}} + u_{jk}'(a_{j}^*)u_{kj}'(a_{k}^*)\frac{d a_{k}^*}{d \beta_{i}}\right) \\ + \rho \sum_{k \in \mathcal{N} \setminus \{j\}}\frac{d G_{jk}^*}{d \beta_{i}} u_{jk}'(a_{j}^*) u_{kj}\left(a_{k}^*\right), 
    \end{multline} and \begin{multline}
        \label{eq:actionderbetajeqigeneral}
        C_{i}''(a_{i}^*) \frac{d a_{i}^*}{d \beta_{i}} = 1 + \rho \sum_{k \in \mathcal{N} \setminus \{i\}}G_{ik}^*\left(u_{ik}''(a_{i}^*)u_{ki}(a_{k}^*)\frac{d a_{i}^*}{d \beta_{i}} + u_{ik}'(a_{i}^*)u_{ki}'(a_{k}^*)\frac{d a_{k}^*}{d \beta_{i}}\right) \\ + \rho \sum_{k \in \mathcal{N} \setminus \{i\}}\frac{d G_{ik}^*}{d \beta_{i}} u_{ik}'(a_{i}^*) u_{ki}\left(a_{k}^*\right).
    \end{multline} Moreover, implicitly differentiating (\ref{eq:linkfocgeneral}) with respect to $\beta_{i}$, for every link intensity, we have \begin{align}
        \label{eq:linkderivbetageneral}
        F_{kl}''(g_{kl}^*) \frac{d g_{kl}^*}{d \beta_{i}} = \rho \left(u_{kl}'(a_{k}^*)u_{lk}(a_{l}^*)\frac{d a_{k}^*}{d \beta_{i}} + u_{kl}(a_{k}^*)u'_{lk}(a_{l}^*)\frac{d a_{l}^*}{d \beta_{i}}\right). 
    \end{align} Substitute (\ref{eq:linkderivbetageneral}) in (\ref{eq:actionderbetajneqigeneral}) and (\ref{eq:actionderbetajeqigeneral}) to solve the system of equations in variables $\left(\frac{d a_{k}^*}{d \beta_{i}}\right)_{k \in \mathcal{N}}$. Consequently, there is a matrix $\bm{\mathcal{M}}$ such that \begin{align*}
        \left(\bm{C}-\rho \bm{\mathcal{M}}\right)\frac{d \bm{a}^*}{d \beta_{i}} = \bm{e}_{i},
    \end{align*} where $\bm{C}$ is a diagonal matrix with $C_{ii} = C_{i}''(a_{i}^*)$, and $\bm{e}_{i}$ is the $i^{\text{th}}$ basis vector. By assumption, equilibrium $(\bm{a}^*,\bm{g}^*)$ is differentiable at subsidy $(\bm{\beta},\bm{\sigma})$ and thus \begin{align*}
        \frac{d \bm{a}^*}{d \beta_{i}} = \left(\bm{C}-\rho \bm{\mathcal{M}}\right)^{-1}_{i},
    \end{align*} where $\left(\bm{C}-\rho \bm{\mathcal{M}}\right)^{-1}_{i}$ is the $i^{\text{th}}$ column of matrix $\left(\bm{C}-\rho \bm{\mathcal{M}}\right)^{-1}$. 

We now follow the same set of calculations to solve for the derivative of equilibrium actions $\bm{a}^*$ with respect to link subsidy $\sigma_{ij}$. For every agent $k \in \mathcal{N}$, implicitly differentiating (\ref{eq:actionfocgeneral}) with respect to $\sigma_{ij}$, we have \begin{multline}
        \label{eq:actiondersigmageneral}
        C_{k}''(a_{k}^*) \frac{d a_{k}^*}{d \sigma_{ij}} = \rho \sum_{l \in \mathcal{N} \setminus \{k\}}G_{kl}^*\left(u_{kl}''(a_{k}^*)u_{lk}(a_{l}^*)\frac{d a_{k}^*}{d \sigma_{ij}} + u_{kl}'(a_{k}^*)u_{lk}'(a_{l}^*)\frac{d a_{l}^*}{d \sigma_{ij}}\right) \\ + \rho \sum_{l \in \mathcal{N} \setminus \{k\}}\frac{d G_{kl}^*}{d \sigma_{ij}} u_{kl}'(a_{k}^*) u_{lk}\left(a_{l}^*\right).
    \end{multline} Moreover, implicitly differentiating (\ref{eq:linkfocgeneral}) with respect to $\sigma_{ij}$, for every link intensity from agent $k$ to $l$, excluding that from agent $i$ to $j$, we have \begin{align} \label{eq:linkderivsigmageneral}
        F_{kl}''(g_{kl}^*) \frac{d g_{kl}^*}{d \sigma_{ij}} = \rho \left(u_{kl}'(a_{k}^*)u_{lk}(a_{l}^*)\frac{d a_{k}^*}{d \sigma_{ij}} + u_{kl}(a_{k}^*)u'_{lk}(a_{l}^*)\frac{d a_{l}^*}{d \sigma_{ij}}\right),
    \end{align} and for the link intensity from agent $i$ to $j$, we have \begin{align} \label{eq:linkderivsigmageneraleqi}
        F_{ij}''(g_{ij}^*) \frac{d g_{ij}^*}{d \sigma_{ij}} = 1+ \rho \left(u_{ij}'(a_{i}^*)u_{ji}(a_{j}^*)\frac{d a_{i}^*}{d \sigma_{ij}} + u_{ij}(a_{i}^*)u'_{ji}(a_{j}^*)\frac{d a_{j}^*}{d \sigma_{ij}}\right).
    \end{align} Substitute (\ref{eq:linkderivsigmageneraleqi}) and (\ref{eq:linkderivsigmageneral}) in (\ref{eq:actiondersigmageneral}) to solve the system of equations, we have \begin{align*}
        \left(\bm{C}-\rho \bm{\mathcal{M}}\right) \frac{d \bm{a}^*}{d \sigma_{ij}} = \rho \frac{1}{F_{ij}''(g_{ij}^*)} \left(u_{ij}'(a_{i}^*)u_{ji}(a_{j}^*)\bm{e}_{i} + u_{ij}(a_{i}^*)u_{ji}'(a_{j}^*)\bm{e}_{j}\right),
    \end{align*} where $\bm{e}_{i}$ and $\bm{e}_{j}$ are standard basis vectors. By assumption, equilibrium $(\bm{a}^*,\bm{g}^*)$ is differentiable at subsidy $(\bm{\beta},\bm{\sigma})$ and thus \begin{align} \label{eq:actderivsigmageneral}
        \frac{d \bm{a}^*}{d \sigma_{ij}} = \rho \underbrace{\frac{1}{F_{ij}''(g_{ij}^*)} }_{\widetilde{f}_{ij}}\left(u_{ij}'(a_{i}^*)u_{ji}(a_{j}^*)\left(\bm{C}-\rho \bm{\mathcal{M}}\right)^{-1}_{i} + u_{ij}(a_{i}^*)u'_{ji}(a_{j}^*) \left(\bm{C}-\rho \bm{\mathcal{M}}\right)^{-1}_{j}\right). 
    \end{align}

 For a given welfare objective $\mathcal{W}(\cdot)$, the Lagrangian of the planner's optimization problem is given by \begin{align*}
        \mathcal{L}(\bm{a},\bm{g},\bm{\beta},\bm{\sigma},\lambda) = \mathcal{W}(\bm{a}) - \lambda \left(\sum_{k \in \mathcal{N}}\beta_{k}a_{k} + \sum_{k \in \mathcal{N}}\sum_{l \in \mathcal{N} \setminus \{k\}}\sigma_{kl}G_{kl}\right).
    \end{align*} We begin by taking the derivative of the Lagrangian with respect to $\beta_{i}$: \begin{align} \label{eq:lagderivbetageneral}
        \frac{d \mathcal{L}}{d \beta_{i}} = \nabla \mathcal{W}^{\top} \frac{d \bm{a}^*}{d \beta_{i}} -\lambda \left(\sum_{k \in \mathcal{N}} \beta_{k} \frac{d a_{k}^*}{d \beta_{i}} + \sum_{k \in \mathcal{N}}\sum_{l \in \mathcal{N} \setminus \{k\}}\sigma_{kl}\frac{d G_{kl}^*}{d \beta_{i}}\right) - \lambda a_{i}^*.
    \end{align} Substituting (\ref{eq:linkderivbetageneral}) in the expression above, there is a vector $\bm{\omega}$ such that the derivative of the Lagrangian with respect to $\beta_{i}$ satisfies \begin{align} \label{eq:reducedvecgeneral}
        \frac{d \mathcal{L}}{d \beta_{i}} = \bm{\omega}^{\top} \frac{d \bm{a}^*}{d \beta_{i}} - \lambda a_{i}^*.
    \end{align} Furthermore, substituting the expression \begin{align*}
        \frac{d \bm{a}^*}{d \beta_{i}} = \left(\bm{C}-\rho \bm{\mathcal{M}}\right)^{-1}_{i},
    \end{align*} into (\ref{eq:reducedvecgeneral}) we have \begin{align}
        \frac{d \mathcal{L}}{d \beta_{i}} &= \bm{\omega}^{\top} \left(\bm{C}-\rho \bm{\mathcal{M}}\right)^{-1}_{i}  - \lambda a_{i}^*, \\ \label{eq:betafocfingeneral}
        &= \mathcal{R}_{i}' - \lambda a_{i}^*,
    \end{align} where we define $\mathcal{R}_{i}' := \bm{\omega}^{\top} \left(\bm{C}-\rho \bm{\mathcal{M}}\right)^{-1}_{i}$.

We next take the derivative of the Lagrangian with respect to $\sigma_{ij}$: \begin{align} \label{eq:lagderivsigmageneral}
        \frac{d \mathcal{L}}{d \sigma_{ij}} = \nabla \mathcal{W}^{\top} \frac{d \bm{a}^*}{d \sigma_{ij}} -\lambda \left(\sum_{k \in \mathcal{N}} \beta_{k} \frac{d a_{k}^*}{d \sigma_{ij}} + \sum_{k \in \mathcal{N}}\sum_{l \in \mathcal{N} \setminus \{k\}}\sigma_{kl}\frac{d G_{kl}^*}{d \sigma_{ij}}\right) - \lambda G_{ij}^*.
    \end{align} Substituting (\ref{eq:actderivsigmageneral}) into (\ref{eq:lagderivsigmageneral}), we have \begin{align}
        \frac{d \mathcal{L}}{d \sigma_{ij}} &= \bm{\mathcal{\omega}}^{\top} \frac{d \bm{a}^*}{d \sigma_{ij}} - \lambda \widetilde{f}_{ij}\left(\sigma_{ij}+\sigma_{ji}\right) - \lambda G_{ij}^*, \\ \label{eq:lagderivsigmageneraltwo}
        &= \rho \widetilde{f}_{ij} \left(\mathcal{R}_{i}'u_{ij}'(a_{i}^*)u_{ji}(a_{j}^*) + \mathcal{R}_{j}'u_{ij}(a_{i}^*)u'_{ji}(a_{j}^*)\right) - \lambda \widetilde{f}_{ij}\left(\sigma_{ij} + \sigma_{ji}\right) - \lambda G_{ij}^*. 
    \end{align}

An analogous calculation will establish that \begin{align}
    \frac{d \mathcal{L}}{d \sigma_{ji}} &= \rho \widetilde{f}_{ji} \left(\mathcal{R}_{j}'u_{ji}'(a_{j}^*)u_{ij}(a_{i}^*) + \mathcal{R}_{i}'u_{ji}(a_{j}^*)u_{ij}'(a_{i}^*)\right) - \lambda \widetilde{f}_{ji} \left(\sigma_{ij}+\sigma_{ji}\right) - \lambda G_{ij}^*. \label{eq:lagderivsigmageneralji} 
\end{align}

We will now establish optimal subsidies by solving for the KKT first-order conditions. Instead of analyzing the derivative of the objective with respect to each individual link intensity $\sigma_{ij}$, we analyze the sum of the derivatives across pairs of agents. Adding (\ref{eq:lagderivsigmageneralji}) and (\ref{eq:lagderivsigmageneraltwo}), we get

\begin{multline} \label{eq:sigmaderivsum}
    \underbrace{\frac{\partial \mathcal{L}}{\partial \sigma_{ij}} + \frac{\partial \mathcal{L}}{\partial \sigma_{ji}}}_{\Delta_{ij}\mathcal{L}} = \rho \left(\widetilde{f}_{ij} + \widetilde{f}_{ji}\right) \left(\mathcal{R}_{i}' u_{ij}'(a_{i}^*) u_{ji}(a_{j}^*) + \mathcal{R}_{j}' u_{ij}(a_{i}^*) u_{ji}'(a_{j}^*)\right) \\ - \left(\widetilde{f}_{ij} + \widetilde{f}_{ji}\right)  \lambda \left(\sigma_{ij} + \sigma_{ji}\right) - 2\lambda G_{ij}^*. 
\end{multline}

Consider any optimal subsidy $(\bm{\beta}^*,\bm{\sigma}^*)$ and corresponding equilibrium $(\bm{a}^*,\bm{g}^*)$ such that the budget constraint binds, that is, $\lambda > 0$. Analyzing (\ref{eq:betafocfingeneral}), the KKT first-order conditions imply \begin{align*}
    \mathcal{R}_{i}' \leq \lambda a_{i}^* \quad \text{ and } \mathcal{R}_{j}' \leq \lambda a_{j}^*
\end{align*} Substituting in (\ref{eq:sigmaderivsum}), we have \begin{align*}
    \Delta_{ij}\mathcal{L} &\leq \lambda  \left(\widetilde{f}_{ij} + \widetilde{f}_{ji}\right) \left(\rho a_{i}^* u_{ij}'(a_{i}^*) u_{ji}(a_{j}^*) + \rho a_{j}^*u_{ij}(a_{i}^*)u_{ji}'(a_{j}^*) - \frac{2G_{ij}^*}{\widetilde{f}_{ij} + \widetilde{f}_{ji}} - \sigma_{ij} - \sigma_{ji}\right).
\end{align*} Suppose the cost $F_{ij}(\cdot)$ is super-quadratic. Then, this equation can be bounded and rearranged as \begin{multline*}
   \frac{\Delta_{ij} \mathcal{L}}{\lambda  \left(\widetilde{f}_{ij} + \widetilde{f}_{ji}\right)} \leq  \rho a_{i}^* u_{ij}'(a_{i}^*) u_{ji}(a_{j}^*) + \rho a_{j}^*u_{ij}(a_{i}^*)u_{ji}'(a_{j}^*) - 2F_{ij}'(g_{ij}^*) \frac{F_{ij}''(g_{ji}^*)}{F_{ij}''(g_{ji}^*)+F_{ij}''(g_{ij}^*)}  \\ - 2F_{ij}'(g_{ji}^*) \frac{F_{ij}''(g_{ij}^*)}{F_{ij}''(g_{ji}^*)+F_{ij}''(g_{ij}^*)} - \sigma_{ij} - \sigma_{ji},
\end{multline*} where we substituted $\widetilde{f}_{ij} + \widetilde{f}_{ji} = 1/F_{ij}''(g_{ij}^*) + 1/F_{ij}''(g_{ji}^*)$. Substituting the expression of equilibrium link intensities in (\ref{eq:linkfocgeneral}), we have \begin{multline*}
   \frac{\Delta_{ij} \mathcal{L}}{\lambda  \left(\widetilde{f}_{ij} + \widetilde{f}_{ji}\right)} \leq  \rho a_{i}^* u_{ij}'(a_{i}^*) u_{ji}(a_{j}^*) + \rho a_{j}^*u_{ij}(a_{i}^*)u_{ji}'(a_{j}^*) - 2s_{ij} -2\rho u_{ij}(a_{i}^*)u_{ji}(a_{j}^*) \\  - \sigma_{ij} \left(\frac{2F_{ij}''(g_{ji}^*)}{F_{ij}''(g_{ji}^*) + F_{ij}''(g_{ij}^*)} + 1\right) - \sigma_{ji} \left(\frac{2F_{ij}''(g_{ij}^*)}{F_{ij}''(g_{ji}^*) + F_{ij}''(g_{ij}^*)}+1\right). 
\end{multline*} When $u_{ij}(\cdot)$ and $u_{ji}(\cdot)$ are sub-linear, we have \begin{align} \label{eq:deltalagcondition}
    \frac{\Delta_{ij} \mathcal{L}}{\lambda  \left(\widetilde{f}_{ij} + \widetilde{f}_{ji}\right)} \leq -2s_{ij} - \sigma_{ij} \left(\frac{2F_{ij}''(g_{ji}^*)}{F_{ij}''(g_{ji}^*) + F_{ij}''(g_{ij}^*)} + 1\right) - \sigma_{ji} \left(\frac{2F_{ij}''(g_{ij}^*)}{F_{ij}''(g_{ji}^*) + F_{ij}''(g_{ij}^*)}+1\right).
\end{align} 

We will now show that at an optimal intervention $(\bm{\beta},\bm{\sigma})$, the link subsidy $\sigma_{ij}^*=\sigma_{ji}^*=0$ when $s_{ij} \geq 0$. To establish a contradiction, suppose
$\sigma_{ij}^*>0$ and $\sigma_{ji}^*>0$. By the KKT first-order conditions, it must be that the derivative of the Lagrangian with respect to $\sigma_{ij}$ and $\sigma_{ji}$ is zero and thus $\Delta_{ij} \mathcal{L} = 0$. This is a contradiction to the inequality in (\ref{eq:deltalagcondition}). 

Next, suppose $\sigma_{ij}^*>0$ and $\sigma_{ji}^*=0$. Analyzing the equilibrium link intensity condition in (\ref{eq:linkfocgeneral}), because $\sigma_{ij}^* > \sigma_{ji}^*$ we have $g_{ij}^* > g_{ji}^*$. By assumption that the derivative of the cost of link intensity $F_{ij}'(\cdot)$ is weakly convex, we have \begin{align*}
    \widetilde{f}_{ji} = \frac{1}{F_{ij}''(g_{ji}^*)} \geq \frac{1}{F_{ij}''(g_{ij}^*)} = \widetilde{f}_{ij}. 
\end{align*} We now compare the derivative of the Lagrangian with respect to $\sigma_{ij}$ and $\sigma_{ji}$. By assumption that $\sigma_{ij}^*>0$, the KKT first-order conditions gives $d \mathcal{L} / d \sigma_{ij} = 0$. Substituting in (\ref{eq:lagderivsigmageneraltwo}) we have \begin{align*}
    \widetilde{f}_{ij} \underbrace{\left(\rho \left(\mathcal{R}_{i}'u_{ij}'(a_{i}^*)u_{ji}(a_{j}^*) + \mathcal{R}_{j}'u_{ij}(a_{i}^*)u_{ji}'(a_{j}^*)\right) - \lambda \left(\sigma_{ij}^*+\sigma_{ji}^*\right) \right)}_{\mathcal{T}_{ij}}  = \lambda G_{ij}^*. 
\end{align*} Since $g_{ij}^*>0$, the overall link $G_{ij}^*>0$ and consequently $\mathcal{T}_{ij}>0$. Analyzing the derivative of the Lagrangian with respect to $\sigma_{ji}$ in (\ref{eq:lagderivsigmageneralji}), we have \begin{align*}
    \frac{d \mathcal{L}}{d \sigma_{ji}} = \widetilde{f}_{ji} \mathcal{T}_{ij} - \lambda G_{ij}^* \geq \widetilde{f}_{ij} \mathcal{T}_{ij} - \lambda G_{ij}^*  = \frac{d \mathcal{L}}{d \sigma_{ij}}.
\end{align*} Here, the inequality follows from the fact that $\mathcal{T}_{ij} > 0$ and $\widetilde{f}_{ji} \geq \widetilde{f}_{ij}$. Using $\frac{d \mathcal{L}}{d \sigma_{ij}}=0$, we have $d \mathcal{L}/d \sigma_{ji} \geq 0$ and thus $\Delta_{ij} \mathcal{L} \geq 0$. But this is a contradiction to the bound derived in (\ref{eq:deltalagcondition}). A symmetric analysis establishes that it cannot be the case $\sigma_{ji}^*>0$ and $\sigma_{ij}^*=0$. Thus, it must be $\sigma_{ij}^*=\sigma_{ji}^*=0$.

\textit{Proof of Part (ii)}: Suppose $\beta_{i}^*>0$ and $\beta_{j}^*>0$. By the KKT first-order conditions, $\mathcal{R}_{i}' = \lambda a_{i}^*$ and $\mathcal{R}_{j}' = \lambda a_{j}^*$. Substituting in (\ref{eq:sigmaderivsum}), we have \begin{align*}
    \Delta_{ij} \mathcal{L} = \lambda  \left(\widetilde{f}_{ij} + \widetilde{f}_{ji} \right)\left(\rho \left(a_{i}^* u_{ij}'(a_{i}^*) u_{ji}(a_{j}^*) + a_{j}^*u_{ij}(a_{i}^*)u_{ji}'(a_{j}^*) \right) - \frac{2G_{ij}^*}{\widetilde{f}_{ij} + \widetilde{f}_{ji}} - \sigma_{ij} - \sigma_{ji}\right).
\end{align*} Suppose the cost $F_{ij}(\cdot)$ is sub-quadratic. Then, the equation can be bounded and rearranged as \begin{multline*}
   \Delta_{ij} \mathcal{L} \cdot \frac{1}{\lambda  \left(\widetilde{f}_{ij} + \widetilde{f}_{ji}\right)} \geq  \rho a_{i}^* u_{ij}'(a_{i}^*) u_{ji}(a_{j}^*) + \rho a_{j}^*u_{ij}(a_{i}^*)u_{ji}'(a_{j}^*) - 2F_{ij}'(g_{ij}^*) \frac{F_{ij}''(g_{ji}^*)}{F_{ij}''(g_{ji}^*)+F_{ij}''(g_{ij}^*)}  \\ - 2F_{ij}'(g_{ji}^*) \frac{F_{ij}''(g_{ij}^*)}{F_{ij}''(g_{ji}^*)+F_{ij}''(g_{ij}^*)} - \sigma_{ij} - \sigma_{ji},
\end{multline*} where we substituted $\widetilde{f}_{ij} + \widetilde{f}_{ji} = 1/F_{ij}''(g_{ij}^*) + 1/F_{ij}''(g_{ji}^*)$. Substituting the expression of equilibrium link first-order conditions (\ref{eq:linkfocgeneral}) above, we have \begin{multline}
   \frac{\Delta_{ij} \mathcal{L}}{\lambda  \left(\widetilde{f}_{ij} + \widetilde{f}_{ji}\right)} \geq  \rho a_{i}^* u_{ij}'(a_{i}^*) u_{ji}(a_{j}^*) + \rho a_{j}^*u_{ij}(a_{i}^*)u_{ji}'(a_{j}^*) - 2s_{ij} -2\rho u_{ij}(a_{i}^*)u_{ji}(a_{j}^*) \\  - \sigma_{ij}\left(\frac{2F_{ij}''(g_{ji}^*)}{F_{ij}''(g_{ji}^*)+F_{ij}''(g_{ij}^*)}+1\right) - \sigma_{ji}\left(\frac{2F_{ij}''(g_{ij}^*)}{F_{ij}''(g_{ji}^*)+F_{ij}''(g_{ij}^*)}+1\right). 
\end{multline} If $u_{ij}(\cdot)$ and $u_{ji}(\cdot)$ are super-linear, then \begin{align*}
     \frac{\Delta_{ij} \mathcal{L}}{\lambda  \left(\widetilde{f}_{ij} + \widetilde{f}_{ji}\right)} \geq -2s_{ij} - \sigma_{ij}\left(\frac{2F_{ij}''(g_{ji}^*)}{F_{ij}''(g_{ji}^*)+F_{ij}''(g_{ij}^*)}+1\right) - \sigma_{ji}\left(\frac{2F_{ij}''(g_{ij}^*)}{F_{ij}''(g_{ji}^*)+F_{ij}''(g_{ij}^*)}+1\right).
\end{align*} If $s_{ij} < 0$ and $\sigma_{ij}^*=\sigma_{ji}^*=0$, then by the above equation $\Delta_{ij} \mathcal{L} > 0$. This contradicts the KKT first-order conditions which imply $\Delta_{ij} \mathcal{L} \leq 0$. Thus, we must have $\sigma_{ij}^*+\sigma_{ji}^* > 0$. 
\end{proof}

\end{document}